\newtheorem{remark}{Remark}
\newtheorem{lemma}{Lemma}
\newtheorem{proposition}{Proposition}
\newtheorem{definition}{Definition}
\newtheorem*{proof}{Proof}
\newtheorem{assumption}{Assumption}
\begin{document}
\pgfdeclarelayer{background}
\pgfdeclarelayer{foreground}
\pgfsetlayers{background,main,foreground}

	\begin{frontmatter}
		
		\title{Meta-learning for model-reference data-driven control 
			\thanksref{footnoteinfo}} 
		
		\thanks[footnoteinfo]{
			This paper is partially supported by the FAIR (Future Artificial Intelligence Research) project, funded by the NextGenerationEU program within the PNRR-PE-AI scheme (M4C2, Investment 1.3, Line on Artificial Intelligence).
		}
		\author[First]{Riccardo Busetto}\ead{riccardo.busetto@polimi.it}, 
		\author[Second]{~~Valentina Breschi}, 
		\author[First]{~~Simone Formentin}
		\address[First]{Dipartimento di Elettronica, Bioingegneria e Informazione, Politecnico di Milano, Via Ponzio 34/5, Milano, Italy}
		\address[Second]{Department of Electrical Engineering, Eindhoven University of Technology, 5600 MB Eindhoven, The Netherlands}

		\begin{keyword}                           
			meta-learning, data-driven control, Virtual Reference Feedback Tuning, model-reference control               
		\end{keyword}                             

		\begin{abstract}                
		One-shot direct model-reference control design techniques, like the Virtual Reference Feedback Tuning (VRFT) approach, offer time-saving solutions for the calibration of fixed-structure controllers for dynamic systems. Nonetheless, such methods are known to be highly sensitive to the quality of the available data, often requiring long and costly experiments to attain acceptable closed-loop performance. These features might prevent the widespread adoption of such techniques, especially in low-data regimes. In this paper, we argue that the inherent similarity of many industrially relevant systems may come at hand, offering additional information from plants that are similar (yet not equal) to the system one aims to control. Assuming that this supplementary information is available, we propose a novel, direct design approach that leverages the data from similar plants, the knowledge of controllers calibrated on them, and the corresponding closed-loop performance to enhance model-reference control design. More specifically, by constructing the new controller as a combination of the available ones, our approach exploits all the available priors following a \textit{meta-learning} philosophy, while ensuring non-decreasing performance. An extensive numerical analysis supports our claims, highlighting the effectiveness of the proposed method in achieving performance comparable to iterative approaches, while at the same time retaining the efficiency of one-shot direct data-driven methods like VRFT. 	
	\end{abstract}
		
	\end{frontmatter}
	
	\section{Introduction}
        
   Real-world control applications often require calibrating parametric controllers with the same structure for systems that are \emph{similar}, yet not identical, in nature and scope. This is the case, \textit{e.g.}, when considering a platoon of mass-produced cars that navigate in diverse environments or a motor undergoing different industrial cycles. 
   
   In this paper, we will argue that in such contexts the knowledge of the controllers already tuned for some systems within a certain set and the insights into the achievable closed-loop performance  \textit{might be valuable assets to expedite and enhance the control design process} for a new plant belonging to the same family. The enabling technology to attain this goal will be the use of some similarity measures for the considered plants. 
   
   \textbf{Related literature.} A similar argument is at the heart of the so-called \emph{meta-learning} approaches (see \cite{Vanschoren2019,rivolli2022meta} and references therein), which rely on a set of \emph{meta-data} comprising all the knowledge acquired from past experience to enhance the learning of a new model or task, by looking at its similarity with the ones already identified or performed. 
However, existing meta-learning approaches mainly focus on improving the performance of classification \cite{oreshkin2018tadam,lee2019meta}, model fitting (see \emph{e.g.,} \cite{tripuraneni21a}), reinforcement learning algorithms (\cite{finn2017model,vuorio2019multimodal,Yoon2018,Hospedales22}), or, more recently, on reducing the design effort of global optimization tools \cite{busetto2023meta,chakrabarty2022optimizing,rothfuss2023meta}. 
    

In the systems and control area, strategies exist to exploit similarities for improving model fitting following a federated learning perspective (see \emph{e.g.,} \cite{BRESCHI2020a,BRESCHI2020b,ferrarotti21a,mansour2022fedcontrol} for some examples), whereas only few control design approaches have embraced the meta-learning vision.
    For instance, the work in \cite{Xin22} deals with the classical problem of identifying a linear model for a linear, time-invariant dynamical system. This work shows that using data collected from a single auxiliary system (similar to the target one) improves the finite sample accuracy of the identified model at the price of introducing a bias dictated by the difference between the auxiliary and the target systems. Shifting from linear to non-linear model fitting, the work in \cite{park2022meta} focuses on modeling a set of systems sharing the same (unknown) dynamics while determining their (possibly different) operational contexts. By casting a bilevel optimization problem to learn these unknowns from data coming from multiple systems, the authors show in simulation that the obtained model can be successfully employed within a model predictive (MPC) scheme to control the motion of a planar fully actuated rotorcraft (PFAR), even when its operational context is time-varying. Instead, the Bayesian approach presented in \cite{arcari2023bayesian} exploits meta-learning principles to enhance the description of modeling errors for a system that has to undertake multiple tasks. In particular, the authors propose to improve the system's model by exploiting data collected when it performs different assignments, to use such a model in designing model-based controllers, and prove the effectiveness of this choice on both simulated and real hardware robotic applications by coupling their strategy with a learning-based MPC scheme. 
    
    Although leveraging the meta-learning rationale, all the existing approaches still focus on learning a model of the system rather than a controller. This transition is performed in \cite{guo2023imitation} and \cite{richards2022control}. In particular, the work in \cite{guo2023imitation} focuses on reconstructing a Linear Quadratic Gaussian (LQG) regulator from closed-loop data. By exploiting the separation principle, the authors show how to effectively leverage input/output data sequences gathered by deploying both the target regulator and other LQG controllers designed with different weighting matrices. Nonetheless, their objective is to \emph{reconstruct} (and thus imitate) an existing controller rather than calibrating one from scratch. Meanwhile, the strategy presented in \cite{richards2022control} leverages ensemble models (constructed from different input/output datasets) to retrieve a parametric adaptive controller that effectively copes with unmodelled disturbances. Through simulation examples involving the motion control of planar fully actuated and underactuated quadrotors subject to wind, the authors show the effectiveness of this control-oriented meta-learning rationale and its superiority over a model-oriented one. Nevertheless, this control design approach requires a \emph{preliminary identification phase} that can be lengthy and expensive, especially if the number of ensemble models is high. Therefore, none of these methods is model-free and tailored to design a controller directly from data. 

    \textbf{Contributions.} In this work, we target the above research gap by proposing a novel \emph{meta-design} approach for calibrating fixed-structure controllers directly from data, without requiring any preliminary identification phase. By focusing on controlling a linear time-invariant system with unknown dynamics, the stepping stone of our method is the well-known Virtual Reference Feedback Tuning (VRFT) approach \cite{campi2002,formentin2019deterministic}. As for this technique, our first contribution (\textbf{Contribution C1}) is thus to carry out a calibration procedure to tightly match the behavior of a user-defined reference model, while at the same time incorporating additional meta-data into the design process, so as to ultimately obtain what we will call the \emph{meta-controller}. 
    
    Our meta-dataset will comprise open-loop data gathered from systems similar to the one we aim to control, in addition to controllers with the chosen prefixed structure tailored for these systems and the outcomes of closed-loop tests run with them. This last assumption is deemed reasonable since it is unlikely for one to exploit an existing controller to tune a new one before testing it. In line with the meta-learning rationale, we propose to use this additional meta-data to retrieve the controller for the new system \textit{by optimizing a convex combination of the controllers available in the meta-dataset}. Our extensive analysis of the ideal setup (\emph{i.e.,} model-based and noise-free) allows us to show that this structural choice leads to \emph{non-degrading matching performance} whenever the new system belongs to the set of plants used to construct the meta-dataset (which however is our standing assumption). The latter will be the second contribution of this paper (\textbf{Contribution C2}). 
    
    Moreover, we will prove (\textbf{Contribution C3}) that the closed-loop matching error attained with the new controller is bounded, and the bound depends on $(i)$ the performance experienced in the meta-dataset and $(ii)$ the similarity of the other plants and the new controlled system. We exploit these theoretical insights to augment the VRFT loss for meta-design with two additional regularization terms (\textbf{Contribution C4}), thus shaping the calibrated convex combination based on data-driven indicators of similarity and measured closed-loop performance. 
    
    Note that our structural choice to consider the convex combination of controllers has never been considered in similar works (see, \textit{e.g.}, \cite{agarwal2020boosting,kumar2021learning} for approaches exploiting the same structural assumption), as it may lead to stability problems even when all the controllers in the meta-dataset are individually stabilizing the new system. However, by leveraging the seminal work \cite{van2011data}, we establish (\textbf{Contribution C5}) theoretical and practical sufficient conditions on the controllers within the meta-dataset, for their combination to result in a controller that stabilizes the closed loop. We will then translate the above stability constraints into their data-driven counterparts and incorporate them into the direct design procedure. 
    
    As a final contribution (\textbf{Contribution C6}), we test the proposed approach in the tuning of a PI (Proportional Integral) controller within a field-oriented scheme for brushless DC motors. The natural differences arising between multiple instances of these motors due to the manufacturing process and their operational conditions once deployed, along with the industrial relevance of containing control calibration time for each new motor instance, make this example an ideal benchmark to analyze the impact of the meta-learning approach to data-driven design. By using a relatively small dataset gathered from the motor to be controlled, our results show that the proposed meta-design strategy leads to dramatically improved closed-loop performance with respect to those achieved with a controller tuned with the classical VRFT approach. A performance enhancement is experienced with the proposed method even when compared to more robust (and demanding) iterative approaches.
	
    The paper is organized as follows. The objectives of the work are introduced in Section~\ref{sec:statement}, along with the first mathematical formulation of the meta-design problem. Its main properties are then discussed in Section~\ref{sec:properties}. Based on these features, we introduce the data-driven counterpart of the considered problem in Section~\ref{sec:data_meta}. The effectiveness of the proposed direct, meta-design strategy is finally shown in a simulation case study in Section~\ref{sec:data_meta}. The paper is ended by some concluding remarks.
    
    \textbf{Notation.} Let $\mathbb{N}$ be the set of natural numbers (including zero), $\mathbb{R}$ denote the set of real numbers and $\mathbb{R}^{n}$ indicate the set of real column vectors of dimension $n$. Given $a \in \mathbb{R}^{n}$, we denote its transpose as $a^{\top}$ and $k$-th component as $[a]_{k}$, for $k=1,\ldots,n$. Still considering this vector and an $n \times n$ real matrix $A \in \mathbb{R}^{n \times n}$, we compactly denote the quadratic form $a^{\top}Aa$ as $\|a\|_{A}^{2}$. Given a set of matrices $\{A_{k} \in \mathbb{R}^{n \times n}\}_{k=1}^{N}$, $\mathrm{diag}(A_{1},\ldots,A_{N}) \in \mathbb{R}^{nN \times Nn}$ is the block-diagonal matrix having them as its diagonal entries. Given a stochastic process $v \in \mathbb{R}$, its expected value is $\mathbb{E}[v]$ while its variance is denoted as $\mathrm{var}[v]$. The uniform distribution within the interval $[a,b]$ is denoted as $\mathcal{U}_{[a,b]}$.

	\section{Setting and goal}\label{sec:statement}
	\begin{figure}[!tb]
		\centering
		\begin{tikzpicture}
					\node[coordinate] (reference) {};
					\node[draw,circle,right of=reference,node distance=.75cm] (sum1) {};
					\node[coordinate,right of=sum1,node distance=1.25cm] (aid4) {};
					\node[draw,rectangle,right of=aid4,node distance=1cm] (Ci) {$C(\theta_{k})$};
					\node [draw,isosceles triangle,right of=Ci,node distance=1.5cm] (alphai) {$[\alpha]_{k}$};
					\node[draw,circle,right of=alphai,node distance=1.25cm] (sum2) {};
					\node[rectangle,above of=Ci,node distance=.75cm] (aid6) {$\vdots$};
					\node[rectangle,above of=alphai,node distance=.75cm] (aid6bis) {$\vdots$};
					\node[rectangle,below of=Ci,node distance=.6cm] (aid7) {$\vdots$};		
					\node[rectangle,below of=alphai,node distance=.6cm] (aid7bis) {$\vdots$};		
					\node[draw,rectangle,above of=aid6,node distance=.6cm] (C1) {$C(\theta_{1})$};
					\node [draw,isosceles triangle,right of=C1,node distance=1.5cm] (alpha1) {$\alpha_{1}$};
					\node[draw,rectangle,below of=aid7,node distance=.75cm] (CN) {$C(\theta_{N})$};
					\node [draw,isosceles triangle,right of=CN,node distance=1.5cm] (alphaN) {$\alpha_{N}$};
					\node[draw,rectangle,right of=sum2,node distance=1.5cm,minimum size=7mm] (plant) {$G$};
					\node[coordinate,right of=plant,node distance=.75cm] (aid1) {};
					\node[coordinate,right of=aid1,node distance=.5cm] (output) {};
					\node[coordinate,below of=CN,node distance=1cm] (aid2) {};
					\node[rectangle,below of=aid2,node distance=.1cm] (aid3) {};
					
					\draw[->] (reference) -- node[near start,yshift=.2cm]{$r(t)$} (sum1);
					\draw[-] (sum1) -- node[yshift=.2cm]{$e^{\mathrm{o}}(t)$}(aid4);
					\draw[->] (aid4) -- (Ci);
					\draw[->] (aid4) |- (C1);
					\draw[->] (aid4) |- (CN);
					\draw[->] (C1) -- (alpha1);
					\draw[->] (Ci) -- (alphai);
					\draw[->] (CN) -- (alphaN);
					\draw[->] (alpha1) -| (sum2);
					\draw[->] (alphai) -- (sum2);
					\draw[->] (alphaN) -| (sum2);
					\draw[->] (sum2) -- node[yshift=.2cm]{$u(t)$}(plant);
					\draw[->] (plant) -- node[yshift=.2cm,near end]{$y^{\mathrm{o}}(t)$} (output);
					\draw[-] (aid1) |- (aid2); 
					\draw[->] (aid2) -| node[yshift=2cm,xshift=.2cm]{$-$}(sum1);
					
					\begin{pgfonlayer}{background}
						\path (C1.west |- C1.north)+(-0.6,0.3) node (a) {};
						\path (alpha1.east |- alphaN.east)+(+0.6,-0.6) node (c) {};
						\path (alphaN.south -| alphaN.east)+(+1,-0.3) node (b) {\textcolor{blue!70!black}{$C(\alpha)$}};
						\path[fill=blue!5!white,rounded corners, draw=blue!70!black, dashed]
						(a) rectangle (c);           
					\end{pgfonlayer}
			\end{tikzpicture}
		\caption{Scheme of the closed-loop system with the meta-controller $C(\alpha)$. The noise-free tracking error is denoted as $e^{\mathrm{o}}(t)=r(t)-y^{\mathrm{o}}(t)$, while the dependence on the back-shift operator is omitted for the sake of readability.}\label{fig:meta_scheme}
	\end{figure}
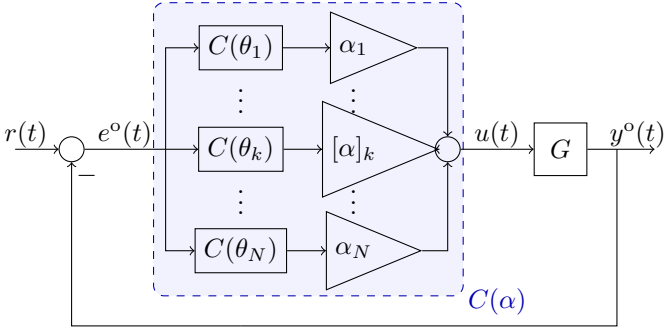

	Consider a plant $G$ within the class of systems described by the following linear, time-invariant input/output relationship:
	\begin{equation}\label{eq:system_dyn}
		\mathcal{G}:~~y^{\mathrm{o}}(t)=G(q^{-1})u(t),
	\end{equation}
	where $u(t) \in \mathbb{R}$ is the input at time $t \in \mathbb{N}$, $y^{\mathrm{o}}(t)$ is the corresponding \emph{noiseless} output, and $G(q^{-1})$ is an \emph{unknown}, proper rational function in the back-shift operator\footnote{$q^{-i}u(t)=u(t-i)$, $\forall i \in \mathbb{Z}$.} $q^{-1}$. Our aim is to design a \emph{parametric} controller for this system within the class
		\begin{equation}\label{eq:controller_class}
			\mathcal{C}(\theta):~~u(t)=\underbrace{\theta^{\top}\beta(q^{-1})}_{C(q^{-1};\theta)}(r(t)-y(t)),
		\end{equation}
	for the closed-loop system to match a desired response $y^{d}(t)$ to a reference $r(t)$ for all $t$, with $\theta \in \mathbb{R}^{n_{\theta}}$ being the set of parameters to be tuned and $\beta(q^{-1}) \in \mathbb{R}^{n_{\theta}}$ being a vector of prefixed, rational basis function in $q^{-1}$. The desired behavior is here dictated by a stable, \emph{user-defined} reference model $M$, characterized by the relationship:
	\begin{equation}\label{eq:reference_model}
		y^{d}(t)=M(q^{-1})r(t),
	\end{equation} 
	where $M(q^{-1})$ is a proper, rational function in $q^{-1}$ different from the unitary gain (\emph{i.e.,} $M(q^{-1}) \neq 1$).

	Although the true system dynamics is unknown, suppose that we have access to a finite set of input/output data\footnote{Data can be collected in open-loop when $G$ is stable, otherwise they can be gathered from closed-loop experiments, provided a stabilizing (even if poorly performing) controller is available.} $\mathcal{D}_{T}=\{u(t),y(t)\}_{t=1}^{T}$ gathered from $G$, where $y(t)$ is defined as 
	\begin{equation}\label{eq:noisy_measurements}
		y(t)=y^{\mathrm{o}}(t)+v(t),
	\end{equation}
     and $v(t)$ is the realization of a zero-mean noise at time $t$. In addition, assume that we have information on $N\geq1$ plants $\{G_{k}\}_{k=1}^{N}$, whose dynamics is also unknown but nonetheless \emph{similar} to that of $G$ according to the following definition.
     
	\begin{definition}[$\varepsilon$-similarity]\label{def:1}
		Two systems $G_{1}$ and $G_{2}$ within the class $\mathcal{G}$ in \eqref{eq:system_dyn} are said to be $\varepsilon$-similar if the following property holds:
		\begin{equation}\label{eq:varepsilon_similarity}
			\|G_{1}(q^{-1})-G_{2}(q^{-1})\|_{2}\leq \varepsilon,~~\varepsilon>0.
		\end{equation}
	\end{definition}
	Specifically, let us suppose to have access to the following data.  
        \begin{enumerate}
            \item A set of controllers $C_{k}=C(q^{-1};\theta_{k}) \subset \mathcal{C}(\theta), \ k=1,\ldots,N$, which stabilize the corresponding plants $G_{k}$ in closed-loop and have been tuned with any of the existing data-driven, model-reference strategies to match the reference model $M$ in \eqref{eq:reference_model}. 
            \item The input/output pairs $\mathcal{D}_{T}^{k}=\{u(t),y_{k}(t)\}_{t=1}^{T}$ used for such a tuning procedure, that share the same inputs of $\mathcal{D}_{T}$ while featuring outputs corrupted by a zero-mean noise $v_{k}(t)$, namely
            \begin{equation}\label{eq:noisy_metaoutputs}
                y_{k}(t)=y_{k}^{\mathrm{o}}(t)+v_{k}(t),~~~~k=1,\ldots,N.
            \end{equation}
            \item The (bounded) outputs $\mathcal{D}_{T^{\mathrm{cl}}}^{i}=\{y_{k}^{\mathrm{cl}}(t)\}_{t=1}^{T^{\mathrm{cl}}}$ obtained when tracking a prefixed (bounded) reference $\{\tilde{r}(t)\}_{t=1}^{T^{\mathrm{cl}}}$ by closing the loop on $G_{k}$ with the data-driven controller $C_{k}$, with $k=1,\ldots,N$.
        \end{enumerate}
        These elements constitute what we indicate as the \emph{meta-dataset}: 
	\begin{equation}\label{eq:meta_dataset}
		\mathcal{D}_{N}^{\mathrm{meta}}=\{C_{k},\mathcal{D}_{T}^{k},\mathcal{D}_{T^{\mathrm{cl}}}^{k}\}_{k=1}^{N},
	\end{equation}
        that, together with $\mathcal{D}_{T}$, represent the set of all the information at our disposal to design the controller for $G$, namely
        \begin{equation}\label{eq:D}
            \mathcal{D}=\{\mathcal{D}_{T} \cup \mathcal{D}_{N}^{\mathrm{meta}}\}.
        \end{equation}

	Instead of designing the new controller $C(\theta)$ for $G$ from $\mathcal{D}_{T}$ only, in this work we propose to leverage all the available information in $\mathcal{D}$ to design (in a model-reference fashion) the \emph{meta-controller} $C(\alpha) \subset \mathcal{C}(\theta)$ characterized by:
	 	\begin{subequations}\label{eq:meta_controller}
	 	\begin{equation}
	 		C(q^{-1};\alpha)\!=\!\!\sum_{k=1}^{N}[\alpha]_{k}\theta_{k}^{\top}\beta(q^{-1})\!=\!\!\sum_{k=1}^{N}[\alpha]_{k}C(q^{-1};\theta_{k}),
	 	\end{equation}
 		with
 		\begin{align}
 			& [\alpha]_{k} \geq 0,~~\forall k=1,\ldots,N,\\
 			& \sum_{k=1}^{N} [\alpha]_{k}=1,
 		\end{align}
	 \end{subequations}
 	namely the controller given by the convex combination of those available within the meta-dataset $\mathcal{D}_{N}^{\mathrm{meta}}$. 
  
    By referring to the closed-loop in \figurename{~\ref{fig:meta_scheme}}, this meta-design problem can be formalized as: 
		\begin{subequations}\label{eq:ideal_meta_problem}
			\begin{align}
				&\underset{\alpha}{\mathrm{minimize}}~~J(\alpha)\\
				& \qquad~~ \mbox{s.t. }~~C(q^{-1};\alpha)=\sum_{k=1}^{N}[\alpha]_{k}C(q^{-1};\theta_{k}),\label{eq:ideal_constr1}\\
				& \qquad~~\qquad~ [\alpha]_{k} \geq 0,~~k=1,\ldots,N,\label{eq:ideal_constr2}\\
				& \qquad~~\qquad~\sum_{k=1}^{N} [\alpha]_{k}=1,\label{eq:ideal_constr3}
			\end{align}
		where 
		\begin{equation}\label{eq:ideal_meta_loss}
			J(\alpha)\!=\!\bigg\|F(q^{-1})\!\!\left[M(q^{-1})\!-\!\frac{C(q^{-1};\alpha)G(q^{-1})}{1\!+\!C(q^{-1};\alpha)G(q^{-1})}\right]\!\!\bigg\|_{2},
		\end{equation}
		\end{subequations}
	and $F(q^{-1})$ is any user-defined weighting filter. 
	
	\section{Some properties of \eqref{eq:ideal_meta_problem} and possible variations}\label{sec:properties}
	Before shifting our attention to the data-driven counterpart of \eqref{eq:ideal_meta_problem}, let us shed light on the performance one would attain by solving this ideal meta-control problem and, consequently, introduce some changes in the original formulation to enhance the design process. To this end, we rely on the following approximation \cite[Section~2]{van2011data} of the loss in \eqref{eq:ideal_meta_problem}:
	\begin{equation}\label{eq:approx_ideal_meta_loss}
		J(\alpha) \approx \|F\Xi M-F\Xi ^{2}C(\alpha)G\|_{2},
	\end{equation} 
	where $\Xi (q^{-1})=1-M(q^{-1})$. For the sake of simplifying the notation, in the following we do not explicitly show the dependence on the backward-shift operator $q^{-1}$. 
 
	\subsection{Non-deteriorating performance}\label{sec:non_deteriorating}
	When the unknown system $G$ is not just similar but \emph{equal} to one of the plants $G_{k}$ on which $\mathcal{D}_{N}^{\mathrm{meta}}$ is build upon, with $k \in \{1,\ldots,N\}$, solving \eqref{eq:ideal_meta_problem} should either enhance the closed-loop matching performance with respect to the ones attained with $C_{k} \in \mathcal{D}_{N}^{\mathrm{meta}}$ or, in the worst case, retain the ones achieved with it. 

    Let $J_{k}$ define the loss associated with the $k$-th controller $C_{k} \in \mathcal{D}_{N}^{\mathrm{meta}}$, namely
	\begin{equation}\label{eq:J_k}
		J_{k} \!=\!\bigg\| F\!\left[M\!-\!\frac{C_{k}G}{1+C_{k}G}\right]\!\!
		\bigg\|_{2}\!\!\approx\! \|F\Xi M-F\Xi ^{2}C_{k}G\|_{2}
	\end{equation} and the solution of \eqref{eq:ideal_meta_problem} be
	\begin{equation}\label{eq:optimal_solution}
		\alpha^{\star}=\underset{\alpha \mbox{ s.t. } \eqref{eq:ideal_constr1}-\eqref{eq:ideal_constr3}}{\arg\min} J(\alpha).
	\end{equation}
	Then, we can formalize the first property of the meta-controller as follows.\\
 
	\begin{proposition}[Non-deteriorating performance]\label{prop:non_decrease}
		Let us assume that there exists one and only one $k \in \{1,\ldots,N\}$ such that the unknown plant $G$ satisfies
        \begin{equation}\label{eq:equal_plant}
		\|G-G_{k}\|_{2}=0,
	\end{equation}
        and that the controllers corresponding to the other plants are \textit{non-canceling}, namely
		\begin{equation}\label{eq:no_cancelling_controllers}
			\sum_{\substack{i=1\\i \neq k}}^{N} [\alpha]_{i}C_{i} \!=\! 0,
		\end{equation} 
        if and only if $[\alpha]_{i} \!=\! 0$, for all $i\neq k$ with $i\!=\!1,\ldots,N$.
        Then, the optimal tuning $\alpha^{\star}$ in \eqref{eq:optimal_solution} is such that 
		\begin{equation}\label{eq:non_deteriorating_ineq}
			J(\alpha^{\star}) \leq J_{k}.
		\end{equation}
	\end{proposition}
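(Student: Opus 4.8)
The plan is to exhibit a feasible point of \eqref{eq:ideal_meta_problem} that attains exactly the value $J_{k}$, so that the minimizer $\alpha^{\star}$ can only do as well or better. Consider the \emph{vertex} selection $\bar{\alpha}$ defined by $[\bar{\alpha}]_{k}=1$ and $[\bar{\alpha}]_{i}=0$ for all $i\neq k$. This choice clearly satisfies the simplex constraints \eqref{eq:ideal_constr2}--\eqref{eq:ideal_constr3}, and by \eqref{eq:ideal_constr1} it yields $C(q^{-1};\bar{\alpha})=\sum_{i=1}^{N}[\bar{\alpha}]_{i}C_{i}=C_{k}$. Substituting into the (approximated) loss \eqref{eq:approx_ideal_meta_loss} gives $J(\bar{\alpha})=\|F\Xi M-F\Xi^{2}C_{k}G\|_{2}=J_{k}$, where the last equality is exactly the definition \eqref{eq:J_k}. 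Since $\bar{\alpha}$ is feasible and $\alpha^{\star}$ is the global minimizer over the feasible set, $J(\alpha^{\star})\leq J(\bar{\alpha})=J_{k}$, which is \eqref{eq:non_deteriorating_ineq}.

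The role of the hypotheses \eqref{eq:equal_plant} and \eqref{eq:no_cancelling_controllers} is subtler than the core argument above and deserves a second look. The equality $\|G-G_{k}\|_{2}=0$ ensures that the loss $J(\alpha)$ built on the \emph{true} unknown plant $G$ coincides with the one built on $G_{k}$ at the candidate point, so that $J(\bar{\alpha})$ really equals the $J_{k}$ recorded in (or computable from) the meta-dataset; without it one would only get $J(\alpha^{\star})\leq \|F\Xi M - F\Xi^{2}C_{k}G\|_{2}$ with $G\neq G_{k}$, which is not the quantity of interest. The non-canceling assumption \eqref{eq:no_cancelling_controllers} is what guarantees that $\bar{\alpha}$ is the \emph{unique} way (within the simplex) to reproduce the controller $C_{k}$ — equivalently, that no nontrivial mixing of the remaining controllers silently collapses onto $C_{k}$ — which is needed if one wants the inequality \eqref{eq:non_deteriorating_ineq} to be tied to the specific tuning $\alpha^{\star}$ and to rule out degenerate feasible representations; it also underlies the companion claim that strict improvement occurs whenever $\alpha^{\star}\neq\bar{\alpha}$.

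I expect the only genuine obstacle to be bookkeeping around the approximation \eqref{eq:approx_ideal_meta_loss}: the identity $J(\bar{\alpha})=J_{k}$ holds \emph{exactly} for the exact losses in \eqref{eq:ideal_meta_loss} and \eqref{eq:J_k} (since both reduce to $\|F[M-C_{k}G/(1+C_{k}G)]\|_{2}$ when $C(\alpha)=C_{k}$), so the argument is cleanest if stated at the level of the exact loss and then transferred to the approximated one, noting that the approximation in \eqref{eq:approx_ideal_meta_loss} and \eqref{eq:J_k} is applied consistently to both sides. I would therefore phrase the proof first for the exact loss $J$ of \eqref{eq:ideal_meta_loss}, invoke feasibility of the vertex $\bar{\alpha}$ and optimality of $\alpha^{\star}$ to conclude $J(\alpha^{\star})\le J(\bar{\alpha})=J_{k}$, and only afterwards remark that the same chain of (in)equalities survives verbatim under the common approximation used throughout Section~\ref{sec:properties}. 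The hypotheses \eqref{eq:equal_plant}--\eqref{eq:no_cancelling_controllers} then enter exactly as described: \eqref{eq:equal_plant} to identify $J(\bar\alpha)$ with the meta-dataset quantity $J_k$, and \eqref{eq:no_cancelling_controllers} to pin down $\bar\alpha$ as the unique simplex representative of $C_k$.
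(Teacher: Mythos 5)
Your proposal is correct and follows essentially the same route as the paper: the paper also reduces the claim to the feasibility of the vertex $[\alpha]_{k}=1$, $[\alpha]_{i}=0$ for $i\neq k$ (reached there via the triangle-inequality bound $J(\alpha)\leq J_{k}+\|F\Xi^{2}[C_{k}-C(\alpha)]G\|_{2}$, whose second term vanishes precisely at that vertex thanks to the non-canceling condition \eqref{eq:no_cancelling_controllers}), followed by optimality of $\alpha^{\star}$. Your direct substitution $J(\bar{\alpha})=J_{k}$ is, if anything, a cleaner way to the same conclusion, and your reading of the hypotheses---\eqref{eq:no_cancelling_controllers} pinning down the vertex as the unique representative of $C_{k}$, and \eqref{eq:equal_plant} tying $J_{k}$ (defined on the new plant $G$) to the performance recorded for the $k$-th element of the meta-dataset---is consistent with how the paper uses them.
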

	\begin{proof}
		According to \eqref{eq:J_k}, let us rewrite $J(\alpha)$ in \eqref{eq:approx_ideal_meta_loss} as
		\begin{equation*}
			J(\alpha) \!\approx\! \|[F\Xi M\!-\!F\Xi ^{2}C_{k}G]\!+\!F\Xi ^{2}\left[C_{k}\!-\!C(\alpha)\right]G\|_{2}.
		\end{equation*}
		Based on the triangle inequality, $J(\alpha)$ satisfies
		\begin{equation}\label{eq:fundamental_inequality}
			J(\alpha) \leq J_{k}+\underbrace{\|F\Xi ^{2}\left[C_{k}\!-\!C(\alpha)\right]G\|_{2}}_{\doteq \Delta J_{k}(\alpha)},~~\forall \alpha,
		\end{equation}
		with $\Delta J_{k}(\alpha)$ being non-negative by definition. Its minimum value $\Delta J_{k}(\tilde{\alpha})=0$ is attained when 
		\begin{align*}
			&C_{k}-C(\tilde{\alpha})=0 \\
			& \Rightarrow (1-[\tilde{\alpha}]_{k})C_{k}-\sum_{\substack{i=1\\i \neq k}}^{N} [\tilde{\alpha}]_{i}C_{i}=0,
		\end{align*}
		and \eqref{eq:no_cancelling_controllers} holds for all $\alpha$. Then $\tilde{\alpha}$ must satisfy
		\begin{equation*}
			[\tilde{\alpha}]_{k}=1,~~~~[\tilde{\alpha}]_{i}=0,~i\neq k, i=1,\ldots,N,
		\end{equation*} 
		that, according to \eqref{eq:fundamental_inequality}, leads to
		\begin{equation*}
			J(\tilde{\alpha}) \leq J_{k}+\Delta J_{k}(\tilde{\alpha})=J_{k}.
		\end{equation*}	
		Since, by definition $J(\alpha^{\star}) \leq J(\bar{\alpha})$, this ends the proof. \hfill $\square$
	\end{proof}

	\subsection{Performance bounds}\label{subsec:bounds}
	Toward establishing if and when the choice of designing $C(\alpha)$ (meta-learning) instead of $C(\theta)$ (standard data-driven design) can be effective, it is fundamental to analyze the relationship between the closed-loop performance attained with the meta-controller and the actual similarity between $G$ and $\{G_{k}\}_{k=1}^{N}$, along with its link to the closed-loop performance achieved by the controllers in the meta-dataset \eqref{eq:meta_dataset}. 
	
    With this in mind, let us introduce 
	\begin{equation}\label{eq:experienced_cost}
		\tilde{J}_{k}\!=\!\bigg\| \!F\!\left[M\!-\!\frac{C_{k}G_{k}}{1+C_{k}G_{k}}\right]\!\!
		\bigg\|_2\!\approx\! \|F\Xi M-F\Xi ^{2}C_{k}G_{k}\|_2,
	\end{equation}
	indicating the performance attained by deploying the controller $C_{k}$ in feedback with the system $G_{k}$, for $k=1,\ldots,N$. Moreover, let us express the new plant $G$ as a function of the $k$-th system within the dataset, \emph{i.e.,}
	\begin{subequations}\label{eq:plant_difference}
	\begin{equation}
		G=G_{k}+\Delta G_{k},
	\end{equation}  
	where 
	\begin{equation}\label{eq:plant_difference_bound}
		\|\Delta G_{k}\|_2 \leq \varepsilon,
	\end{equation}
	according to Definition~\ref{def:1} for some (unknown) $\varepsilon$, and $\Delta G_{k}=0$ only when $G$ and $G_{k}$ are equal, for $k=1,\ldots,N$.
	\end{subequations}
	
	We can now formalize the relationship between $J(\alpha)$ in \eqref{eq:ideal_meta_loss} and $\{\tilde{J}_{k},\Delta G_{k}\}_{k=1}^{N}$ as follows. 
 
	\begin{proposition}[Bound on performance]\label{prop:bound}
		Let $G$ and $\{G_{k}\}_{k=1}^{N}$ verify \eqref{eq:plant_difference} for some (unknown) $\varepsilon$. Then, for all $\alpha$ satisfying \eqref{eq:ideal_constr2}-\eqref{eq:ideal_constr3} it holds that:  
		\begin{subequations}\label{eq:upper_bound}
			\begin{equation}\label{eq:actual_upperbound}
			J(\alpha) \leq \sum_{k=1}^{N}[\alpha]_{k} \left(\tilde{J}_{k}+\mathcal{S}_{k}\right)
			\end{equation}
			with $\{\tilde{J}_{k}\}_{k=1}^{N}$ defined as in \eqref{eq:experienced_cost} and
			\begin{equation}\label{eq:ideal_loss_similarity}
				\mathcal{S}_{k} \leq \|F\Xi ^{2}C_{k}\|_2\varepsilon, \ k=1,\ldots,N.
			\end{equation}
		\end{subequations} 
	\end{proposition}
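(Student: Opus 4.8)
The plan is to exploit the convexity constraints \eqref{eq:ideal_constr2}--\eqref{eq:ideal_constr3} together with two successive applications of the triangle inequality, working throughout with the approximation \eqref{eq:approx_ideal_meta_loss} as in the rest of Section~\ref{sec:properties}. First I would distribute the argument of the norm in \eqref{eq:approx_ideal_meta_loss} over the convex combination: since $\sum_{k=1}^{N}[\alpha]_{k}=1$ and $C(\alpha)=\sum_{k=1}^{N}[\alpha]_{k}C_{k}$, one has
\[
F\Xi M - F\Xi^{2}C(\alpha)G = \sum_{k=1}^{N}[\alpha]_{k}\left(F\Xi M - F\Xi^{2}C_{k}G\right),
\]
so that, using $[\alpha]_{k}\ge 0$ and the triangle inequality for $\|\cdot\|_{2}$,
\[
J(\alpha)\le \sum_{k=1}^{N}[\alpha]_{k}\,\|F\Xi M - F\Xi^{2}C_{k}G\|_{2} = \sum_{k=1}^{N}[\alpha]_{k}J_{k},
\]
with $J_{k}$ the cost defined in \eqref{eq:J_k}.

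Next I would bring in the similarity decomposition \eqref{eq:plant_difference}, substituting $G=G_{k}+\Delta G_{k}$ inside each $J_{k}$, which gives $F\Xi M - F\Xi^{2}C_{k}G = \left(F\Xi M - F\Xi^{2}C_{k}G_{k}\right) - F\Xi^{2}C_{k}\Delta G_{k}$. A second application of the triangle inequality then yields $J_{k}\le \tilde{J}_{k} + \|F\Xi^{2}C_{k}\Delta G_{k}\|_{2}$, with $\tilde{J}_{k}$ the experienced cost \eqref{eq:experienced_cost}. Defining $\mathcal{S}_{k}\doteq\|F\Xi^{2}C_{k}\Delta G_{k}\|_{2}$ and combining with the previous bound delivers \eqref{eq:actual_upperbound}.

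It then remains to establish \eqref{eq:ideal_loss_similarity}, which is where the only genuine work lies: I would bound $\mathcal{S}_{k}=\|F\Xi^{2}C_{k}\Delta G_{k}\|_{2}$ by submultiplicativity of the system norm, i.e. $\|F\Xi^{2}C_{k}\Delta G_{k}\|_{2}\le\|F\Xi^{2}C_{k}\|_{2}\,\|\Delta G_{k}\|_{2}$ (or, equivalently, the mixed $\mathcal{H}_{\infty}$/$\mathcal{H}_{2}$ estimate consistent with the paper's norm conventions), and then invoke the $\varepsilon$-similarity bound \eqref{eq:plant_difference_bound}, $\|\Delta G_{k}\|_{2}\le\varepsilon$. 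The main obstacle, such as it is, is therefore the precise norm inequality used to factor $\Delta G_{k}$ out of the product; everything else is bookkeeping with the constraint $\sum_{k}[\alpha]_{k}=1$, which is used once to pull $M$ inside the sum and once implicitly through the convex-combination structure of $C(\alpha)$. I would also state explicitly that, like the companion results in this section, the bound holds modulo the approximation \eqref{eq:approx_ideal_meta_loss}.
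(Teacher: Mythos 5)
Your proposal is correct and follows essentially the same route as the paper's proof: the convex-combination structure \eqref{eq:ideal_constr2}--\eqref{eq:ideal_constr3}, the decomposition $G=G_{k}+\Delta G_{k}$ from \eqref{eq:plant_difference}, the triangle inequality applied twice, and the norm product bound together with \eqref{eq:plant_difference_bound} to get \eqref{eq:ideal_loss_similarity}. The only difference is cosmetic --- you split over $k$ first and substitute $\Delta G_{k}$ inside each term (passing through the intermediate bound $J(\alpha)\le\sum_{k}[\alpha]_{k}J_{k}$), whereas the paper substitutes first and then splits --- which does not change the argument.
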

	\begin{proof}
		Consider the approximation in \eqref{eq:approx_ideal_meta_loss} and replace $C(\alpha)$ with its definition in \eqref{eq:meta_controller}, namely
		\begin{equation}
			J(\alpha) \approx \bigg\|F \Xi M-F\Xi ^{2}\sum_{k=1}^{N}([\alpha]_{k}C_{k}G)\bigg\|_2,
		\end{equation} 
		which we can be equivalently recast  as
		\begin{equation}
			J(\alpha) \approx \bigg\|F \Xi M-F\Xi ^{2}\sum_{k=1}^{N}\{[\alpha]_{k}C_{k}(G_{k}+\Delta G_{k})\}\bigg\|_2,
		\end{equation}
		based on \eqref{eq:plant_difference}. Thanks to \eqref{eq:ideal_constr3}, it further holds that
		\begin{align*}
			& J(\alpha) \approx \bigg\|\sum_{k=1}^{N}[\alpha]_{k}F \Xi M\!-\!F\Xi ^{2}\sum_{k=1}^{N}\{[\alpha]_{k}C_{k}(G_{k}+\Delta G_{k})\}\bigg\|_2,\\
			& = \bigg\|\sum_{k=1}^{N}[\alpha]_{k}\{F \Xi M\!-\!F\Xi ^{2}C_{k}G_{k}\}\!-\!F\Xi ^{2}\sum_{k=1}^{N}[\alpha]_{k}C_{k}\Delta G_{k}\bigg\|_2.
		\end{align*}
		By using the triangle inequality twice and the Cauchy–Schwartz inequality, this approximation can be upper-bounded as follows: 
		\begin{align*}
			J(\alpha) \!&\leq \!\sum_{k=1}^{N}\! \left[\|[\alpha]_{k}\{F \Xi M\!-\!F\Xi ^{2}C_{k}G_{k}\}\|_2+\right.\\
			&\qquad \qquad \qquad \qquad \quad \left.+\|[\alpha]_{k}F\Xi ^{2}C_{k}\Delta G_{k}\|_2\right]\!\\
		&\!\!\!\!\leq\! \sum_{k=1}^{N}\! \|[\alpha]_{k}\|_2\!\left[\|F \Xi M\!-\!F\Xi ^{2}C_{k}G_{k}\|_2\!+\!\|F\Xi ^{2}C_{k}\Delta G_{k}\|_2\right]\\
			&\!\!\!=\!\sum_{k=1}^{N}[\alpha]_{k}\!\!\left[\|F \Xi M\!-\!F\Xi ^{2}C_{k}G_{k}\|_2\!+\!\|F\Xi ^{2}C_{k}\Delta G_{k}\|_2\right]\!,
		\end{align*}
		where the last equality holds thanks to \eqref{eq:ideal_constr2}. Then, the bound in \eqref{eq:actual_upperbound} straightforwardly follows from the definitions in \eqref{eq:experienced_cost} and \eqref{eq:ideal_loss_similarity}, with the upper-bound on $\mathcal{S}_{k}$ being an immediate consequence of the Cauchy–Schwartz inequality and the bound in \eqref{eq:plant_difference_bound}. \hfill $\square$
	\end{proof}

	Proposition~\ref{prop:bound} is a quantitative argument in favor of the (intuitive) fact that $\alpha$ \textquotedblleft prioritizing\textquotedblright \ controllers in $\mathcal{D}_{N}^{\mathrm{meta}}$ leading to \emph{good matching performance} and tuned on plants that are \emph{more similar} to $G$, \emph{i.e.,} such that
	\begin{equation}
		[\alpha]_{k} \gg [\alpha]_{i},~~\mbox{ if }~~\tilde{J}_{k} \ll \tilde{J}_{i}, \Delta G_{k} \ll \Delta G_{i}, 
	\end{equation}	
	for $i,k=1,\ldots,N$ and $i\neq k$, lead to lower upper bounds on the cost (and potentially lower values of $J(\alpha)$).
	  
	\subsection{Closed-loop stability}
	By solving \eqref{eq:ideal_meta_problem}, we have no guarantees that the meta-controller $C(\alpha)$ would lead to a stable closed-loop system. Inspired by \cite{van2011data}, we now point out the features that the controllers in the meta-dataset should enjoy to guarantee the stability of the meta-closed-loop, ultimately allowing us to integrate a \emph{sufficient condition} for closed-loop stability in the meta-design problem.

    To this end, let us now introduce the following quantity
	\begin{equation}\label{eq:Delta_i_def}
		\Delta_{k}=M-C_{k}G\Xi ,
	\end{equation}
	that depends on the $k$-th controller within the meta-dataset, the reference model \eqref{eq:reference_model} and the unknown plant $G$, for $k=1,\ldots,N$. By relying on this definition, consider the following property for a certain controller $C_{k}$ in $\mathcal{D}_{N}^{\mathrm{meta}}$.

	\begin{assumption}\label{assump:2}
		$C_{k} \in \mathcal{D}_{N}^{\mathrm{meta}}$ is such that:
		\begin{enumerate}
			\item[A.1] $\Delta_{k}$ in \eqref{eq:Delta_i_def} is stable;  
			\item[A.2] $\exists \delta_{k} \in (0,1)$ so that
			\begin{equation}\label{eq:condition_inftydelta}
				\|\Delta_{k}\|_{\infty} \leq \delta_{k}.
			\end{equation} 
		\end{enumerate} 
	\end{assumption}
	According to \cite[Theorem 1]{van2011data}, this implies that $C_{k}$ stabilizes the plant $G$. 
	We can now formalize the following result.
	\begin{proposition}[Meta-stability condition]
		The controller $C(\alpha)$ stabilizes the system $G$ if Assumption~\ref{assump:2} is satisfied by all $C_{k} \in \mathcal{D}_{N}^{\mathrm{meta}}$, with $k=1,\ldots,N$. 
	\end{proposition}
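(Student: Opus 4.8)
The plan is to leverage the convexity of the combination together with the per-controller stability certificate from Assumption~\ref{assump:2} and the small-gain-type argument behind \cite[Theorem~1]{van2011data}. First I would recall that, by the cited result, Assumption~\ref{assump:2} for $C_k$ is equivalent to saying that the closed loop of $G$ with $C_k$ is internally stable, and that the quantity $\Delta_k = M - C_k G \Xi$ is exactly the object whose $\mathcal{H}_\infty$ norm being strictly below $1$ certifies this stability; the key algebraic fact is that the sensitivity-type transfer function of the loop $(G, C_k)$ can be written as a stable function of $\Delta_k$ with the closed loop being well-posed precisely because $1 - \Delta_k$ (or an analogous expression) is invertible in $\mathcal{H}_\infty$ when $\|\Delta_k\|_\infty < 1$.

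The central step is then to compute $\Delta(\alpha) \doteq M - C(\alpha) G \Xi$ for the meta-controller and show it inherits both stability and the norm bound. Substituting $C(\alpha) = \sum_{k=1}^N [\alpha]_k C_k$ and using $\sum_k [\alpha]_k = 1$ from \eqref{eq:ideal_constr3}, I would write
\begin{equation*}
\Delta(\alpha) = \sum_{k=1}^{N} [\alpha]_k M - \Big(\sum_{k=1}^{N} [\alpha]_k C_k\Big) G \Xi = \sum_{k=1}^{N} [\alpha]_k \big( M - C_k G \Xi \big) = \sum_{k=1}^{N} [\alpha]_k \Delta_k .
\end{equation*}
Stability of $\Delta(\alpha)$ is immediate since it is a finite linear combination of the stable functions $\Delta_k$ (A.1). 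For the norm bound, I would apply the triangle inequality for $\|\cdot\|_\infty$ together with $[\alpha]_k \geq 0$ from \eqref{eq:ideal_constr2} and then A.2:
\begin{equation*}
\|\Delta(\alpha)\|_\infty \leq \sum_{k=1}^{N} [\alpha]_k \|\Delta_k\|_\infty \leq \sum_{k=1}^{N} [\alpha]_k \delta_k \leq \max_{k} \delta_k < 1 ,
\end{equation*}
so $\Delta(\alpha)$ satisfies the same Assumption~\ref{assump:2} (with $\delta(\alpha) = \sum_k [\alpha]_k \delta_k \in (0,1)$) as each individual $C_k$. Invoking \cite[Theorem~1]{van2011data} once more, now for the pair $(G, C(\alpha))$, yields that $C(\alpha)$ stabilizes $G$, which is the claim.

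The main obstacle I anticipate is not the inequality chain, which is routine, but making the reduction to \cite[Theorem~1]{van2011data} airtight: specifically, checking that the convexity identity $\Delta(\alpha) = \sum_k [\alpha]_k \Delta_k$ genuinely uses $\sum_k [\alpha]_k = 1$ in the $M$ term (it does), and that the hypothesis of the cited theorem is stated purely in terms of stability and $\mathcal{H}_\infty$-norm of $\Delta_k$ with no hidden side conditions (e.g.\ on the number of unstable poles of $G$, or on $M$ having the same ``structure'' as the achieved complementary sensitivity). If the cited theorem additionally requires, say, that $C_k$ and $C(\alpha)$ share a common assumption on $G$'s right-half-plane poles or on properness of the loop gain, I would need to verify those are preserved under convex combination — properness is, and the plant $G$ is fixed throughout, so any pole-count condition on $G$ is unaffected. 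Modulo that bookkeeping, the proof is a two-line consequence of linearity and convexity.
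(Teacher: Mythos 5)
Your proposal is correct and follows essentially the same route as the paper: the same convexity identity $\Delta(\alpha)=\sum_{k}[\alpha]_{k}\Delta_{k}$ (relying on $\sum_{k}[\alpha]_{k}=1$ for the $M$ term), stability of the convex combination from A.1, the triangle inequality with $[\alpha]_{k}\geq 0$ and A.2 to get $\|\Delta(\alpha)\|_{\infty}\leq\sum_{k}[\alpha]_{k}\delta_{k}\leq\max_{k}\delta_{k}<1$, and a final appeal to \cite[Theorem~1]{van2011data} for the pair $(G,C(\alpha))$. The only cosmetic difference is that the paper additionally verifies the lower bound $\delta>0$ via $\min_{k}\delta_{k}$, which your choice $\delta=\max_{k}\delta_{k}\in(0,1)$ already handles.
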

	\begin{proof}
		Along the line of \cite[Theorem 1]{van2011data}, $C(\alpha)$ stabilizes the system $G$ if
		\begin{enumerate}
			\item[P.1] $\Delta(\alpha)=M-C(\alpha)G\Xi $ is stable; 
			\item[P.2] $\exists \delta \in (0,1)$ such that $\delta(\alpha)=\|\Delta(\alpha)\|_{\infty}\leq \delta$.
		\end{enumerate}
		Therefore, we have to show that these sufficient conditions are verified under our assumptions. To this end, let us exploit the definition of the meta-controller \eqref{eq:meta_controller} to recast $\Delta(\alpha)$ as:
		\begin{align}\label{eq:fundamental_equality}
			\nonumber \Delta(\alpha) &=M-\!\sum_{k=1}^{N}[\alpha]_{k}C_{k}G\Xi =\sum_{k=1}^{N}[\alpha]_{k}\left(M\!-\!C_{k}G\Xi \right)\\
			&=\sum_{k=1}^{N}[\alpha]_{k}\Delta_{k},
		\end{align}
		where the last equality follows from \eqref{eq:Delta_i_def}. Since $\{\Delta_{k}\}_{k=1}^{N}$ are stable according to Assumption~\ref{assump:2} and computing their convex combination does not change their poles, then $\Delta(\alpha)$ is also stable, ultimately proving P.1. To show that P.2 holds, let us still rely on the equality in \eqref{eq:fundamental_equality}. By using the triangle and the Cauchy–Schwartz inequalities, it is straightforward to prove that
		\begin{align}\label{eq:upper_bound_proof}
			\nonumber \delta(\alpha)&=\|\Delta(\alpha)\|_{\infty}= \bigg\|\sum_{k=1}^{N}[\alpha]_{k}\Delta_{k}\bigg\|_{\infty}\\
			& \leq \sum_{k=1}^{N}\|[\alpha]_{k}\|_{\infty} \|\Delta_{k}\|_{\infty} \leq \underbrace{\sum_{k=1}^{N}\|[\alpha]_{k}\|_{\infty} \delta_{k}}_{\doteq \delta},
		\end{align}
		where the last equality holds thanks to \eqref{eq:ideal_constr2} and Assumption~\ref{assump:2}. To complete the proof we have now to show that the upper-bound $\delta$ in \eqref{eq:upper_bound_proof} lays in the interval $(0,1)$, which can be easily proven by relying once more on Assumption~\ref{assump:2}. To this end, let us define:
		\begin{equation}
			\bar{\delta}=\!\!\!\!\max_{k \in \{1,\ldots,N\}} \delta_{k},~~~~~\underline{\delta}=\!\!\!\!\min_{k \in \{1,\ldots,N\}} \delta_{k},
		\end{equation}
		with $\bar{\delta}<1$ and $\underline{\delta}>0$ since $\delta_{k} \in (0,1)$ for all $k=1,\ldots,N$ thanks to Assumption~\ref{assump:2}. Accordingly, the following holds:
		\begin{subequations}
		\begin{align}
			\delta &= \sum_{k=1}^{N}\|[\alpha]_{k}\|_{\infty} \delta_{k} \leq \bar{\delta}~ \sum_{k=1}^{N}[\alpha]_{k}=\bar{\delta}<1,\\
			\delta &= \sum_{k=1}^{N}\|[\alpha]_{k}\|_{\infty} \delta_{k} \geq \underline{\delta}~ \underbrace{\sum_{k=1}^{N}[\alpha]_{k}}_{=1}=\underline{\delta}>0,
		\end{align} 
		\end{subequations}
		also thanks to \eqref{eq:ideal_constr3}, and this concludes the proof.
		\hfill $\square$
	\end{proof}
 \begin{remark}
  Assumption~\ref{assump:2} may sound like a strong requirement as $G$ is unknown. However, when $\varepsilon$ is not too large, \textit{e.g.}, when the systems $G_{k}$ represent several instances of the same batch production, and $C_{k}$ are tuned with a suitably stability margin, it is likely that - although the performance may vary for different $k$'s - the controllers will not destabilize $G$.
  Note also that A.1 in Assumption~\ref{assump:2} can be satisfied by following the guidelines provided in \cite[Section~2]{van2011data} when tuning $\{C_{k}\}_{k=1}^{N}$.  Finally, if for any reason we doubt this assumption is not satisfied by one of the controllers in the meta-dataset, the latter can be discarded and $C(\alpha)$ can be constructed based on a reduced meta-dataset $\mathcal{D}_{N-1}^{\mathrm{meta}}$. This will be shown to be doable in a realistic setting in Remark \ref{remark4}.
 \end{remark}
 
To guarantee the verification of P.2 (see again the above proof) we can directly enforce this condition as a constraint in the tuning problem, which is thus recast as
	\begin{subequations}\label{eq:ideal_stab_meta_problem}
		\begin{align}
			&\underset{\alpha}{\mathrm{minimize}}~~J(\alpha)\\
			& \qquad~~ \mbox{s.t. }~~C(q^{-1};\alpha)=\sum_{k=1}^{N}[\alpha]_{k}C(q^{-1};\theta_{k}),\\
			& \qquad~~\qquad~ \alpha_{k} \geq 0,~~k=1,\ldots,N,\\
			& \qquad~~\qquad~\sum_{k=1}^{N} \alpha_{k}=1,\\
			& \qquad~~\qquad~\delta(\alpha) \leq \delta, \label{eq:ideal_stability_contr}
		\end{align}
	\end{subequations}
	where $\delta \in (0,1)$ becomes a tunable parameter, whose choice might lead to a (more or less) conservative tuning of $C(\alpha)$.

    	\subsection{Enhancing \eqref{eq:ideal_meta_problem} with regularization}
As highlighted by Proposition~\ref{prop:bound}, the loss $J(\alpha)$ in \eqref{eq:ideal_meta_loss} is upper-bounded by the convex combination (through $\alpha$) of the losses $\{\tilde{J}_{k}\}_{k=1}^{N}$ characterizing the matching performance of the controllers within $\mathcal{D}_{N}^{\mathrm{meta}}$, and the similarity between the new plant $G$ and $\{G_{k}\}_{k=1}^{n}$. Therefore, encouraging the choice of higher coefficients $[\alpha]_{k}$, with $k \in \{1,\ldots,N\}$, for those controllers that are characterized by small losses (and, thus, better matching performance), while being designed to control plants that are more similar to $G$, would eventually result in improved performance of the meta-controller. 
	
	To steer $\alpha$ towards a choice aligned this rationale, we propose to augment the matching cost $J(\alpha)$ in \eqref{eq:ideal_stab_meta_problem} with two regularization terms, shaping $\alpha$ based on experienced performance and plants similarities. Accordingly, the meta-design problem in \eqref{eq:ideal_stab_meta_problem} can be transformed as follows:
	\begin{subequations}\label{eq:ideal_stab_meta_problem2}
		\begin{align}
			&\underset{\alpha}{\mathrm{minimize}}~~J(\alpha)+\lambda_{J}R_{J}(\alpha;\mathbf{\tilde{J}})\!+\!\lambda_{S}R_{S}(\alpha;\mathbf{\Delta G})\label{eq:ideal_regularizedloss}\\
			& \qquad~~ \mbox{s.t. }~~C(q^{-1};\alpha)=\sum_{k=1}^{N}[\alpha]_{k}C(q^{-1};\theta_{k}),\\
			& \qquad~~\qquad~ \alpha_{k} \geq 0,~~k=1,\ldots,N,\\
			& \qquad~~\qquad~\sum_{k=1}^{N} \alpha_{k}=1,\\
			& \qquad~~\qquad~\delta(\alpha) \leq \delta,\label{eq:stab_constraint_reg}
		\end{align}
	\end{subequations}
	where $R_{J}: \mathbb{R}^{N} \rightarrow \mathbb{R}$ and $R_{S}: \mathbb{R}^{N} \rightarrow \mathbb{R}$ are two (possibly different) regularization functions, $\mathbf{\tilde{J}}=\{\tilde{J}_{k}\}_{k=1}^{N}$, $\mathbf{\Delta G}=\{\Delta G_{k}\}_{k=1}^{N}$, while $\lambda_{J},\lambda_{S}\geq0$ are tunable penalties that modulate the relative importance of the regularization terms with respect to the matching loss.     
	\section{Meta-learning in a realistic setting}\label{sec:data_meta}
    	The upgraded final meta-design problem in \eqref{eq:ideal_stab_meta_problem2} 
     (with respect to \eqref{eq:ideal_meta_problem}, it incorporate a stability constraint and suitable regularization terms) 
     is not yet applicable to a real-world problem. Indeed, it still depends on the input/output relationship of the controlled system $G$, the ideal performance $\mathbf{\tilde{J}}$ of the controllers in the meta-dataset and the mismatch $\mathbf{\Delta G}$ between $G$ and the other plants $\{G_{k}\}_{k=1}^{N}$. All the above details are not known nor directly accessible. Instead, only $\mathcal{D}$ in \eqref{eq:D} is available to solve the meta-design problem. Hence, in this section, our goal is to translate \eqref{eq:ideal_stab_meta_problem2} into its purely direct, data-driven counterpart, thus applicable in a realistic setting.  
	
	\subsection{A data-driven reformulation of $J(\alpha)$}
        Before shifting from the loss $J(\alpha)$ in \eqref{eq:ideal_meta_loss} to its data-based counterpart, let us initially replace it with its square:
        \begin{equation}\label{eq:ideal_squared_loss}
        J(\alpha)\!=\!\bigg\|F(q^{-1})\!\!\left[M(q^{-1})\!-\!\frac{C(q^{-1};\alpha)G(q^{-1})}{1\!+\!C(q^{-1};\alpha)G(q^{-1})}\right]\!\!\bigg\|_2^{2},
        \end{equation}
        which results in a design problem that is easier to handle numerically. We can now follow the footsteps of the VRFT approach \cite{campi2002} to transition toward the data-driven loss. Hence, we define the \emph{virtual reference} as the set point that would return the measured outputs if fed to the reference model, namely
    \begin{equation}\label{eq:virtual_reference}
        y(t)=M(q^{-1})r_{v}(t),~~~t=1,\ldots,T.
    \end{equation}
    in turn resulting in the virtual tracking error
    \begin{equation}
        e_{v}(t)=r_{v}(t)-y(t)=(M(q^{-1})-1)y(t) \simeq 0.
    \end{equation}
    Accordingly, the input generated by the meta-controller has to be equal to that comprised in $\mathcal{D}_{T}$. This allows us to cast the data-driven matching loss as
    \begin{equation}\label{eq:data_driven_loss}
        J^{d}(\alpha)=\sum_{t=1}^{T}\left(u^{L}(t)-C(q^{-1};\alpha)e_{v}^{L}(t)\right)^{2},
    \end{equation}
    where $u^{L}(t)$ and $e_{v}^{L}(t)$ are obtained by filtering the input and virtual error with $L(q^{-1})$ defined as 
    \begin{equation}
        L(q^{-1})=\frac{W(q^{-1})M(q^{-1})(1-M(q^{-1}))}{\Phi_{u}^{1/2}},
    \end{equation}
    with $\Phi_{u}$ being the spectral density of the input signal in $\mathcal{D}_{T}$.  
    \begin{remark}[On the choice of $L(q^{-1})$]
        Since the squared loss in \eqref{eq:ideal_squared_loss} corresponds to the standard objective of the VRFT approach for the basis
        \begin{equation}\label{eq:beta_meta}
            \beta^{\mathrm{meta}}(q^{-1})=\begin{bmatrix}
                C^{\top}(q^{-1};\theta_{1}) &
                \cdots &
                C^{\top}(q^{-1};\theta_{N})
            \end{bmatrix}^{\!\top},
        \end{equation}
        the reasoning carried out in \cite[Section~3]{campi2002} for the selection of $L(q^{-1})$ applies to the considered problem.
    \end{remark}
    \begin{remark}[Retrieving $r_{v}(t)$]
        For the virtual reference to be computed, one needs to invert the reference model. However, this inverse is non-causal every time $M(q^{-1})$ is a strictly proper rational function. This issue can be overcome by manipulating the reference model as discussed in \cite[Proposition~1 and Section~7]{FORMENTIN2016}, in turn resulting in a reduction of the samples available for design.
    \end{remark}

    While allowing us to remove the dependence on the plant model, \eqref{eq:data_driven_loss} is not yet designed to counteract the impact that noisy data have on meta-design. Once again, we overcome this issue by echoing the VRFT approach and resorting to an instrumental variable scheme \cite{soderstrom2002}. Specifically, we assume to perform an additional experiment on the plant $G$ by feeding it with the same input sequence $\{u(t)\}_{t=1}^{T}$ in $\mathcal{D}_{T}$. This allows us to gather a new set of outputs $\{y^{IV}(t)\}_{t=1}^{T}$, corrupted by measurement noise uncorrelated with the one affecting the outputs already available in $\mathcal{D}_{T}$. We can now construct the instrument as
    \begin{equation}\label{eq:instrument}
       \zeta(t)=\beta^{\mathrm{meta}}(q^{-1})e_{v}^{L}(t),
    \end{equation}
    with $\beta^{\mathrm{meta}}(q^{-1})$ defined as in \eqref{eq:beta_meta}, and recast the data-driven cost as
    \begin{equation}\label{eq:data_driven_lossIV}
         J^{d,\mathrm{IV}}(\alpha)=\sum_{t=1}^{T}\left[\zeta(t)\left(u^{L}(t)-C(q^{-1};\alpha)e_{v}^{L}(t)\right)\right]^{2}.
    \end{equation}
    \subsection{Data-driven stability constraint}\label{sec:stability_datadriven}
    To obtain the data-driven counterpart of the stability constraint in \eqref{eq:stab_constraint_reg} we rely on the strategy already proposed in \cite{van2011data}, which we recall only for the case of stable and minimum-phase $G$ due to its relevance for our numerical example\footnote{Nonetheless, the extension to unstable and non-minimum-phase plants is straightforward following \cite{van2011data} and the approach of this section.}. 
    
    Suppose that the data in $\mathcal{D}_{T}$ satisfy the assumptions in \cite[Section 4]{van2011data}. In this scenario, let us introduce
    \begin{equation}\label{eq:error_stable}
        \Delta(\alpha)u(t)\!=\!\left[Mu-C(\alpha)\Xi G\right]u(t)\!\simeq\! \underbrace{Mu(t)-C(\alpha)\Xi y(t)}_{=e_{s}(t;\alpha)}, 
    \end{equation}
    which allows us to link the quantity whose infinity norm we aim at bounding (see \eqref{eq:upper_bound_proof}) and the available data. Note that the last approximation is due to the fact that we are neglecting the impact of measurement noise, which is nonetheless legitimate when the measurement noise and the input sequence in $\mathcal{D}_{T}$ are uncorrelated (see \cite[A4, Section 4]{van2011data}). By relying on \eqref{eq:error_stable}, $\delta(\alpha)$ in \eqref{eq:ideal_stability_contr} can be approximated as
    \begin{equation}\label{eq:delta_approxStable}
        \hat{\delta}(\alpha)=\max_{\omega_{i}}\bigg|\frac{\Phi_{u,e_{s}}(\omega_{i};\alpha)}{\Phi_{u}(\omega_{i})}\bigg|
    \end{equation}
    where $\omega_{i}=2\pi i/(2\ell+1)$, for $i=0,1,\ldots,\ell+1$, and
    \begin{align*}
        & \Phi_{u,e_{s}}(\omega_{i};\alpha)=\sum_{\tau=-\ell}^{\ell}\hat{\Gamma}_{u,e_{s}}(\tau;\alpha)e^{-j\tau\omega_{i}},\\
        & \Phi_{u}(\omega_{i})= \sum_{\tau=-\ell}^{\ell}\hat{\Gamma}_{u}(\tau)e^{-j\tau\omega_{i}},
    \end{align*}
    with $j$ denoting the imaginary unit, and
    \begin{align*}
        & \hat{\Gamma}_{u,e_{s}}(\tau;\alpha)=\frac{1}{T}\sum_{t=1}^{T}u(t-\tau)e_{s}(t;\alpha),~~\tau=-\ell,\ldots,\ell,\\
        & \hat{\Gamma}_{u}(\tau)=\frac{1}{T}\sum_{t=1}^{T}u(t-\tau)u(t),~~\tau=-\ell,\ldots,\ell,
    \end{align*}
    being the sampled auto-correlation of the input and cross-correlation between the input and $e_{s}$ in \eqref{eq:error_stable}. Note that the approximation in \eqref{eq:delta_approxStable} depends on the window length $\ell$, which is an additional hyper-parameter of the design problem.   
    The reader is referred to \cite{van2011data} for additional insights on the derivation of this approximation.

    \begin{remark}[Checking $C_{k} \in \mathcal{D}_{N}^{\mathrm{meta}}$]\label{remark4}
    Apart from being used to approximate $\delta(\alpha)$ in \eqref{eq:stab_constraint_reg}, the previous strategy can be used to approximate $\|\Delta_{k}\|_{\infty}$ in \eqref{eq:condition_inftydelta}, for $k=1,\ldots,N$. In turn, this allows us to empirically evaluate if the controllers added to the meta-dataset stabilize $G$, eventually discarding them if they do not satisfy the data-driven stability condition $\widehat{\|\Delta_{k}\|}_{\infty}\leq \delta_{k}$. 
    \end{remark}
    
    \subsection{Data-driven design of the regularization penalties}
    Let us then focus on the two regularization terms in \eqref{eq:ideal_regularizedloss}. To translate them into their data-driven counterparts, we have to define two data-driven indicators for the experienced performance and the plants' similarities to replace $\mathbf{\tilde{J}}$ and $\mathbf{\Delta G}$, respectively. 

    Under the assumption that the controllers in the meta-dataset have been deployed and tested in closed-loop, we propose to substitute $\mathbf{\tilde{J}}$ with the squared (observed) closed-loop matching error
    \begin{equation}\label{eq:new_indexJtilde}
        \tilde{J}_{k} \rightarrow \tilde{J}_{k}^{d}=\sum_{t=1}^{T^{\mathrm{cl}}}(y^{d}(t)-y_{k}^{\mathrm{cl}}(t))^{2},
    \end{equation}
    where $y^{d}(t)$ is the desired output for a prefixed reference $\tilde{r}(t)$ over $T^{\mathrm{cl}}$ time steps, while $y_{k}^{\mathrm{cl}}(t)$ is the observed closed-loop output comprised in $\mathcal{D}_{N}^{\mathrm{meta}}$, for $k=1,\ldots,N$ and $t=1,\ldots,T^{\mathrm{cl}}$. Meanwhile, since we assume that both $\mathcal{D}_{T}$ and $\mathcal{D}_{T}^{k}$ comprise the same input sequence, we exploit the squared difference between the measured open-loop outputs as a proxy of $\mathbf{\Delta G}$, \emph{i.e.,}
    \begin{equation}\label{eq:new_indexS}
        \Delta G_{k} \rightarrow S_{k}=\sum_{t=1}^{T}(y(t)-y_{k}(t))^{2},~~k=1,\ldots,N.
    \end{equation}
    This choice is in line with our definition of similarity, according to which two plants are \textit{similar} if the average gain of their difference is limited at all frequencies. 
    
    Note that, both $\tilde{J}_{k}^{d}$ in \eqref{eq:new_indexJtilde} and $S_{k}$ in \eqref{eq:new_indexS} are built from noisy data, whose impact on their statistical means is formalized in the following lemmas\footnote{With a slight abuse of notation, stochastic processes will be denoted by dropping their dependence on $t$.}.
    \begin{lemma}[Mean features of $\tilde{J}_{k}^{d}$]
        Given $\tilde{J}_{k}^{d}$ in \eqref{eq:new_indexJtilde}, assume that the measurement noise $v_{k}^{\mathrm{cl}}$ acting on the $k$-th closed-loop output $y_{k}^{\mathrm{cl}}$ is zero-mean. Then:
        \begin{equation}\label{eq:expected_J}
        \frac{1}{T^{\mathrm{cl}}}\mathbb{E}\left[\tilde{J}_{k}^{d}\right] \!=\! \frac{1}{T^{\mathrm{cl}}}\sum_{t=1}^{T^{\mathrm{cl}}}\left(y^{d}(t)-y_{k}^{\mathrm{o,cl}}(t)
        \right)^{2}+var[\tilde{v}_{k}^{\mathrm{cl}}],
        \end{equation}
        with 
        \begin{equation}
            y_{k}^{\mathrm{o,cl}}(t)\!=\!\frac{C_{k}G_{k}}{1+C_{k}G_{k}}\tilde{r}(t),~~~~\tilde{v}_{k}^{\mathrm{cl}}\!=\!\frac{1}{1+C_{k}G_{k}}v_{k}^{\mathrm{cl}}(t).
        \end{equation}
    \end{lemma}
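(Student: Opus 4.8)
The claim is a direct computation of the expected value of the squared closed-loop matching error $\tilde J_k^d = \sum_{t=1}^{T^{\mathrm{cl}}}(y^d(t)-y_k^{\mathrm{cl}}(t))^2$. The plan is to decompose the noisy closed-loop output $y_k^{\mathrm{cl}}$ into its noiseless part plus a filtered-noise term, substitute into the sum, expand the square, and take expectations term by term, using the zero-mean hypothesis to kill the cross term.

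\textbf{Step 1: decompose the closed-loop output.} Closing the loop on $G_k$ with $C_k$, driven by reference $\tilde r(t)$ and with the measurement noise $v_k^{\mathrm{cl}}(t)$ entering additively on the output, the sensitivity-function algebra gives
\[
y_k^{\mathrm{cl}}(t) = \frac{C_kG_k}{1+C_kG_k}\,\tilde r(t) + \frac{1}{1+C_kG_k}\,v_k^{\mathrm{cl}}(t) = y_k^{\mathrm{o,cl}}(t) + \tilde v_k^{\mathrm{cl}}(t),
\]
which is exactly the notation introduced in the statement. I would state this decomposition explicitly (it is just the standard closed-loop transfer-function computation for the feedback scheme of Figure~\ref{fig:meta_scheme}).

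\textbf{Step 2: expand and take expectations.} Writing $y^d(t)-y_k^{\mathrm{cl}}(t) = \big(y^d(t)-y_k^{\mathrm{o,cl}}(t)\big) - \tilde v_k^{\mathrm{cl}}(t)$ and squaring,
\[
\big(y^d(t)-y_k^{\mathrm{cl}}(t)\big)^2 = \big(y^d(t)-y_k^{\mathrm{o,cl}}(t)\big)^2 - 2\big(y^d(t)-y_k^{\mathrm{o,cl}}(t)\big)\tilde v_k^{\mathrm{cl}}(t) + \big(\tilde v_k^{\mathrm{cl}}(t)\big)^2 .
\]
Summing over $t$, dividing by $T^{\mathrm{cl}}$, and applying $\mathbb E[\cdot]$: the first term is deterministic and passes through untouched; the middle term has zero mean because $\tilde v_k^{\mathrm{cl}}$ is a linear (stable) filtering of the zero-mean noise $v_k^{\mathrm{cl}}$ and hence itself zero-mean, while $y^d-y_k^{\mathrm{o,cl}}$ is deterministic; the last term gives $\tfrac{1}{T^{\mathrm{cl}}}\sum_t \mathbb E[(\tilde v_k^{\mathrm{cl}}(t))^2] = \mathrm{var}[\tilde v_k^{\mathrm{cl}}]$, using that $\tilde v_k^{\mathrm{cl}}$ is zero-mean (so its second moment equals its variance) and implicitly stationary so the per-sample variance does not depend on $t$. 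Collecting the surviving terms yields \eqref{eq:expected_J}.

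\textbf{Main obstacle.} The computation is routine; the only delicate point is justifying that $\tilde v_k^{\mathrm{cl}}$ is \emph{zero-mean} — this needs $\tfrac{1}{1+C_kG_k}$ to be a stable filter (equivalently, $C_k$ stabilizes $G_k$, which is guaranteed by the standing assumptions on the meta-dataset) so that filtering a zero-mean process preserves zero mean — and that its variance is time-invariant, which is implicit in treating $v_k^{\mathrm{cl}}$ as a (wide-sense) stationary process, consistent with the "slight abuse of notation" footnote dropping the $t$-dependence. I would note these points briefly rather than belabor them, since the same conventions are already used elsewhere in the paper.
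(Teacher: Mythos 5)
Your proposal is correct and follows essentially the same route as the paper: decompose $y_k^{\mathrm{cl}}$ via superposition into $y_k^{\mathrm{o,cl}}+\tilde v_k^{\mathrm{cl}}$, use the fact that $C_k$ stabilizes $G_k$ so the filtered noise $\tilde v_k^{\mathrm{cl}}$ is (weak-sense) stationary and zero-mean, kill the cross term, and pull the deterministic matching term outside the expectation, leaving $\mathrm{var}[\tilde v_k^{\mathrm{cl}}]$. Your write-up is if anything slightly more explicit than the paper's (which leaves the cross-term cancellation implicit), so no gaps to report.
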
 
    \begin{proof}
        By the superimposition principle, the error characterizing \eqref{eq:new_indexJtilde} can be equivalently rewritten as
        \begin{equation*}
            y^{d}(t)-y_{k}^{\mathrm{cl}}(t)=y^{d}(t)-\frac{C_{k}G_{k}}{1+C_{k}G_{k}}\tilde{r}(t)-\frac{1}{1+C_{k}G_{k}}v_{k}^{\mathrm{cl}}(t).
        \end{equation*}
        Since $C_{k}$ stabilizes $G_{k}$ in closed-loop by construction, the auto-regressive process $\tilde{v}_{k}^{\mathrm{cl}}$ resulting from this decomposition is weak-sense stationary. Therefore, its mean value is constant over time, and it is equal to zero because of our assumption on $v_{k}^{\mathrm{cl}}$. The result in \eqref{eq:expected_J} follows straightforwardly by bringing the term dependent on $y^{d}(t)-y_{k}^{\mathrm{o,cl}}(t)$ outside the expectation, since it is deterministic. \hfill $\square$
    \end{proof}
     This result allows us to highlight the connection between the selected data-driven performance index and $\tilde{J}_{k}$. In particular, the first term on the right-hand-side of \eqref{eq:expected_J} can be equivalently rewritten as
     \begin{equation*}
         \sum_{t=1}^{T^{\mathrm{cl}}}\left(y^{d}(t)-y_{k}^{\mathrm{o,cl}}(t)
        \right)^{2}=\sum_{t=1}^{T^{\mathrm{cl}}}\left[\left(M-\frac{C_{k}G_{k}}{1+C_{k}G_{k}}\right)\tilde{r}(t)
        \right]^{2}\!\!,
     \end{equation*}
     and upper-bounded as follows
     \begin{equation}
       \sum_{t=1}^{T^{\mathrm{cl}}}\!\left[\!\left(M\!-\!\frac{C_{k}G_{k}}{1+C_{k}G_{k}}\right)\!\tilde{r}(t)
        \right]^{2}\!\!\leq\left\|M\!-\!\frac{C_{k}G_{k}}{1+C_{k}G_{k}}\right\|_{2}^{\!2}\!\|\tilde{r}\|_{2}^{\!2},
     \end{equation}
    where $\tilde{r}$ compactly denotes $\{\tilde{r}(t)\}_{t=1}^{T^{\mathrm{cl}}}$, and the first element on the right-hand-side of the inequality corresponds to $\tilde{J}_{k}^{2}$ for $F(q^{-1})=1$.
    \begin{lemma}[Mean features of $S_{k}$]
        Assume that the zero-mean noise sequences acting on the outputs comprised in $\mathcal{D}_{T}$ and $\mathcal{D}_{T}^{k}$ are uncorrelated, namely 
        \begin{equation*}
        v(t_{1}) \perp v_{k}(t_{2}),~~~~\forall t_{1},t_{2}\geq 0,~~k=1,\ldots,N,
        \end{equation*}
        with $v$ and $v_{k}$ introduced in \eqref{eq:noisy_measurements} and \eqref{eq:noisy_metaoutputs}, respectively. Then:
        \begin{equation}\label{eq:normalized_mean}
            \frac{1}{T}\mathbb{E}\left[S_{k}\right] =\frac{1}{T}\sum_{t=1}^{T}(y^{\mathrm{o}}(t)-y_{k}^{\mathrm{o}}(t))^{2}+\mathrm{var}[v]+\mathrm{var}[v_{k}].        
        \end{equation}
    \end{lemma}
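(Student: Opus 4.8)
The plan is to mimic the computation in the proof of the preceding lemma (on the mean features of $\tilde{J}_{k}^{d}$), exploiting the assumed uncorrelatedness of the two noise sequences. First I would write $y(t)=y^{\mathrm{o}}(t)+v(t)$ and $y_{k}(t)=y_{k}^{\mathrm{o}}(t)+v_{k}(t)$ from \eqref{eq:noisy_measurements} and \eqref{eq:noisy_metaoutputs}, so that the summand in \eqref{eq:new_indexS} becomes
\begin{equation*}
(y(t)-y_{k}(t))^{2}=\big[(y^{\mathrm{o}}(t)-y_{k}^{\mathrm{o}}(t))+(v(t)-v_{k}(t))\big]^{2}.
\end{equation*}
Expanding the square gives three groups of terms: the deterministic part $(y^{\mathrm{o}}(t)-y_{k}^{\mathrm{o}}(t))^{2}$, the cross term $2(y^{\mathrm{o}}(t)-y_{k}^{\mathrm{o}}(t))(v(t)-v_{k}(t))$, and the noise square $(v(t)-v_{k}(t))^{2}=v(t)^{2}-2v(t)v_{k}(t)+v_{k}(t)^{2}$.

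Next I would take expectations termwise and sum over $t=1,\ldots,T$. The deterministic part passes through $\mathbb{E}[\cdot]$ untouched, yielding $\sum_{t}(y^{\mathrm{o}}(t)-y_{k}^{\mathrm{o}}(t))^{2}$. The cross term vanishes because $v$ and $v_{k}$ are zero-mean. In the noise square, $\mathbb{E}[v(t)v_{k}(t)]=0$ by the assumed orthogonality $v(t_{1})\perp v_{k}(t_{2})$ for all $t_1,t_2$, while $\mathbb{E}[v(t)^{2}]=\mathrm{var}[v]$ and $\mathbb{E}[v_{k}(t)^{2}]=\mathrm{var}[v_{k}]$ since both processes are zero-mean (here I would also invoke, as is implicit throughout the paper, that these noise processes are stationary so their variances do not depend on $t$). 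Summing the constant variance contributions over the $T$ time steps and dividing through by $T$ then produces exactly \eqref{eq:normalized_mean}.

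This argument is essentially a one-line second-moment computation, so there is no serious obstacle; the only point requiring care is the bookkeeping of the cross terms — specifically making explicit that $\mathbb{E}[v(t)v_{k}(t)]=0$ is precisely what the uncorrelatedness hypothesis buys us, and that the zero-mean property kills the remaining mixed deterministic–stochastic terms. I would also note in passing that, unlike in the previous lemma, here no closed-loop filtering is involved, so $v$ and $v_k$ appear directly (not through a sensitivity transfer function), which is why plain variances — rather than variances of filtered processes — show up in \eqref{eq:normalized_mean}.
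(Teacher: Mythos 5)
Your argument is correct and matches the paper's own proof: both start from the decomposition $y(t)-y_{k}(t)=y^{\mathrm{o}}(t)-y_{k}^{\mathrm{o}}(t)+v(t)-v_{k}(t)$, kill the mixed deterministic--stochastic terms via the zero-mean property, and use the uncorrelatedness of $v$ and $v_{k}$ to reduce the squared noise term to $\mathrm{var}[v]+\mathrm{var}[v_{k}]$. Your version is if anything slightly more explicit (spelling out the stationarity needed for the variances to be time-independent), but it is the same computation.
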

    \begin{proof}
        The proof results from the decomposition 
        \begin{equation*}
            y(t)-y_{k}(t)=y^{\mathrm{o}}(t)-y_{k}^{\mathrm{o}}(t)+v(t)-v_{k}(t).
        \end{equation*}
        Accordingly, the normalized expected value of $S_{k}$ becomes
        \begin{equation}\label{eq:expected_S}
            \frac{1}{T}\mathbb{E}\left[S_{k}\right]\!=\!\frac{1}{T}\sum_{t=1}^{T}(y^{\mathrm{o}}(t)-y_{k}^{\mathrm{o}}(t))^{2}\!+\!\frac{1}{T}\sum_{t=1}^{T}(v(t)\!-\!v_{k}(t))^{2}\!,
        \end{equation}
        where the first term is outside the expectation as it is deterministic, while mixed products disappear since $v$ and $v_{k}$ are both assumed to be zero mean. The result in \eqref{eq:normalized_mean} follows by further decomposing the square of the second term on the right-hand-side of the previous equality and by exploiting the lack of correlation between $v$ and $v_{k}$.    \hfill $\square$
    \end{proof}
    This lemma allows us to bridge between $\Delta G_{k}$ and the chosen data-based index. Indeed, the first term in \eqref{eq:expected_S} can be rewritten as
    \begin{equation}
        \frac{1}{T}\sum_{t=1}^{T}(Gu(t)-G_{k}u(t))^{2}=\frac{1}{T}\sum_{t=1}^{T}(\Delta G_{k} u(t))^{2},
    \end{equation}
    thus depending on the similarity between $G$ and $G_{k}$ (see \eqref{eq:plant_difference}). At the same time, these results indicate that (as expected) noise impacts on the average values of the proposed (normalized) indexes through its variance. In turn, this might make these indicators poor evaluators of similarities and observed performance when the noise is particularly high. Future work will thus be devoted to refine these indexes toward reducing the influence of noise on the regularization penalties.   
    \subsection{The overall data-driven problem}\label{sec:overall}
    Combining all the previous \textquotedblleft ingredients\textquotedblright, we can finally write the numerically tractable, data-driven counterpart of the ideal problem in \eqref{eq:ideal_stab_meta_problem2} as 
    \begin{subequations}\label{eq:DD_stab_meta_problem}
		\begin{align}
			&\underset{\alpha}{\mathrm{minimize}}~~J^{d,IV}(\alpha)\!+\!\lambda_{J}R_{J}(\alpha;\mathbf{\tilde{J}^{d}})\!+\!\lambda_{S}R_{S}(\alpha;\mathbf{S})\label{eq:DD_regularizedloss}\\
			& \qquad~~ \mbox{s.t. }~~C(q^{-1};\alpha)=\sum_{k=1}^{N}[\alpha]_{k}C(q^{-1};\theta_{k}),\\
			& \qquad~~\qquad~ \alpha_{k} \geq 0,~~k=1,\ldots,N,\\
			& \qquad~~\qquad~\sum_{k=1}^{N} \alpha_{k}=1,\\
			& \qquad~~\qquad~\hat{\delta}(\alpha) \leq \delta, \label{eq:DD_stabconstr}
		\end{align}
	\end{subequations}
    where $\mathbf{\tilde{J}^{d}}=\{\tilde{J}_{k}^{d}\}_{k=1}^{N}$ and $\mathbf{S}=\{S_{k}\}_{k=1}^{N}$.

    In what follows (including the numerical example), we will consider the following as a reasonable choice of the regularizers:
    \begin{subequations}\label{eq:FOC_reg}
    \begin{equation}
        R_{S}(\alpha;\mathbf{S})=\|S^{\top}\alpha\|_{1},~~~R_{J}(\alpha;\mathbf{\tilde{J}^{d}})=\|\alpha\|_{\mathrm{\tilde{J}^{d}}}^{2},
    \end{equation}
    with 
    \begin{equation}
    S^{\top}\!\!\!=\!\begin{bmatrix}
            S_{1}^{\top} & \ldots & S_{N}^{\top}
        \end{bmatrix}^{\!\top}\!\!,~~\mathrm{\tilde{J}^{d}}=\mathrm{diag}\left(\tilde{J}_{1}^{d},\ldots,\tilde{J}_{N}^{d}\right).
    \end{equation}
    \end{subequations}
    In other words, we use our insights on similarity to promote shrinkage within $\alpha$ in \eqref{eq:meta_controller} via the 1-norm regularization. This choice is coherent with the idea that controllers designed for systems that are more similar to $G$ are also more likely to be effective on the latter. At the same time, thanks to the performance-oriented Tikhonov regularizer, we steer the elements of $\alpha$ towards similar values if they are associated to controllers that have been proven to perform comparably, while shrinking them whenever the experienced performance are poor. We wish to stress that these are only two possible regularization options, which we will compare to other alternatives in future works.
    \section{Meta-FOC of a brushless DC motor}\label{sec:numerical_example}
    \begin{figure}[!tb]
        \centering
        \includegraphics[scale=.8,trim=0cm 0cm 44.9cm 71cm,clip]{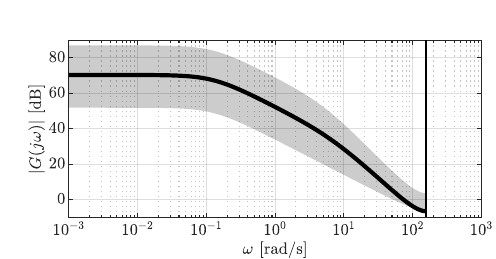}\vspace{-.2cm}
        \caption{Average magnitude of the frequency response of the family of DC motors (black line) and interval in which the other possible responses may lay (shaded area).}\label{fig:bode}
    \end{figure}
    \begin{figure}[!tb]
    \centering
    \includegraphics[scale=.8,trim=0cm 0cm 44.9cm 69.3cm,clip]{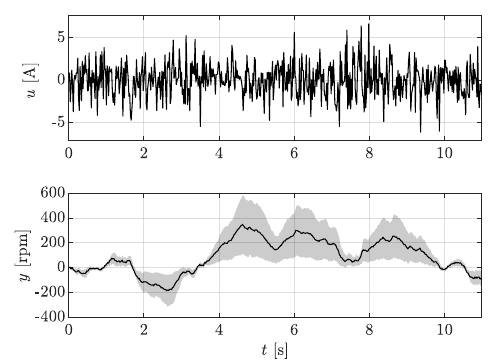}\vspace{-.2cm}
    \caption{Current/speeds pairs collected in open-loop from the $N$ meta-motors. The black line indicates their average output, while the shaded area spotlights the outputs' possible deviations.}
    \label{fig:dataset}
\end{figure}
    We now assess the impact of the proposed strategy on a problem of practical relevance, namely the calibration of the PI controller within a Field-Oriented Control (FOC) scheme for the regulation of a brushless DC motor, according to the scheme in \cite{busetto2023data}. The controller in charge of generating the quadrature axis current $u(t)$~[A] belongs to the following family:
    \begin{equation}\label{eq:DC_PI_contr}
        C(q^{-1};\theta)\!=\!K_{P}\!+\!K_{I}\frac{T_{s}}{2}\frac{1+q^{-1}}{1-q^{-1}},~~~~\theta\!=\!\begin{bmatrix}
            K_{P}\\K_{I}
            \end{bmatrix},
    \end{equation}
    where $T_{s}=0.02$~[s] denotes the sampling time. Our goal is to calibrate $\theta$ in \eqref{eq:DC_PI_contr} for the closed-loop, noiseless motor speed $y^{\mathrm{o}}(t)$ [rpm] to match the output of the ideal target behavior dictated by
    \begin{equation}\label{eq:DC_desired}
        M(q^{-1})=\frac{0.0609q^{-1}}{1-0.9391q^{-1}}.
    \end{equation}
 
    In our simulations, we generate data assuming the dynamical structure of the motor to be given by
    \begin{subequations}\label{eq:DC_family_plant}
    \begin{equation}
        G(q^{-1})=\frac{\kappa q^{-2}}{(1-p_{1}q^{-1})(1-p_{2}q^{-1})},
    \end{equation}
     with $p_{1}=0.9975$, $\kappa \in [1.00,5.75],$ and $p_{2} \in [0,0.9]$.
    The resulting family of motors has a frequency response with magnitude reported in \figurename{~\ref{fig:bode}}, and a level of similarity $\varepsilon=784.55$ (see again \eqref{eq:varepsilon_similarity}).
    \end{subequations}
    Both $G$, indicated from now on as the \emph{new motor}, and the $N=10$ ones used to construct the meta-dataset (that we will refer to as \emph{meta-motors}) are uniformly sampled at random from this family, \emph{i.e.,} the parameters characterizing their dynamics are extracted based on the following sampling rules:
    \begin{equation}
         \kappa \sim\mathcal{U}_{[1,5.75]},~~~~p_{2}\sim \mathcal{U}_{[0,0.9]}.
    \end{equation}    
    In our tests, we always consider $10$ different realizations of the \emph{new motor}, for the outcome of our analysis not to be linked to a specific realization of the system's parameters. We wish to remark that \eqref{eq:DC_family_plant}, along with the true parameters characterizing the dynamics of each motor, are assumed to be unknown and, thus, not exploited for design purposes.

    To construct the dataset, both the new motor and the meta-motors are excited in open loop with the same white, Gaussian distributed quadrature current, with zero mean and standard deviation of 2~[A]. The measured speeds gathered over 11~[s] of experiment are then corrupted by a white, zero-mean, Gaussian noise sequence with standard deviation 10~[rpm] (see \figurename{~\ref{fig:dataset}} for a snapshot of the data collected from the meta-motors), yielding an average signal-to-noise ratio of $\overline{\mathrm{SNR}}\simeq 21.01~\mathrm{[dB]}$. We wish to stress that the resulting sets of input/output data are relatively \emph{small}, since they comprise $T=550$ samples only (typical lengths are of the order of thousands). Closed-loop experiments are carried out with the same level of noise acting on the measured outputs (and thus fed back to the controller). 
    
    The data collected from the $N=10$ meta-motors are further used to design $10$ different controllers with the structure in \eqref{eq:DC_PI_contr}. To this end, here we employ the direct (model-reference) control strategy proposed in \cite{busetto2023data}\footnote{Since it leverages SMGO-$\Delta$ \cite{SABUG2022}, this tuning strategy requires one to select the maximum number of iterations, here set to $70$, and the value of $\Delta$, that we impose equal to 0.5. Moreover, we set the search spaces for the proportional and integral gains as $[10^{-3},1]$ and $[10^{-4},10^{-1}]$, respectively, to span over a large number of candidate parameters.}. The latter entails a closed-loop calibration experiment, that we have carried out for $3$~[s] by considering a step reference with amplitude $1000$~[rpm]. Note that, apart from stabilizing the plant each controller has been designed for all these $N=10$ data-driven controllers stabilize all the new motors we have extracted and tested. Hence, they are never discarded from the meta-dataset. In learning the meta-controller, we consider a unitary weighting filter $F(q^{-1})$ (see the ideal cost in \eqref{eq:ideal_meta_loss}), while we use the regularization terms introduced in Section~\ref{sec:overall}. 
    
    All results reported hereafter have been obtained by solving the optimization problem in \eqref{eq:DD_stab_meta_problem} with the CVX package \cite{gb08,cvx} on an Intel(R) Core(TM) i7-10875H CPU @ 2.30GHz processor with 16 GB of RAM running MATLAB R2021b.

    \subsection{The benefits of meta-design}
    \begin{figure}[!tb]
        \centering
        \begin{tabular}{c}
        \subfigure[Meta-controller\label{fig:comparison_metaControl}]{\includegraphics[scale=.8,trim=0cm 0cm 44.9cm 71cm,clip]{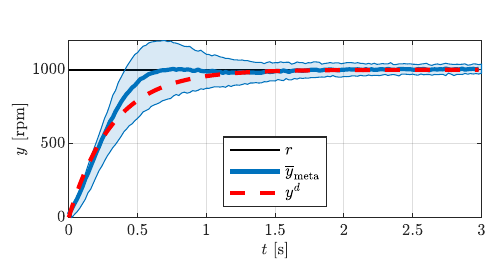}}\vspace{-.3cm}\\
        \subfigure[Controller tuned with SMGO-$\Delta$\label{fig:comparison_SMGO}]{\includegraphics[scale=.8,trim=0cm 0cm 44.9cm 71cm,clip]{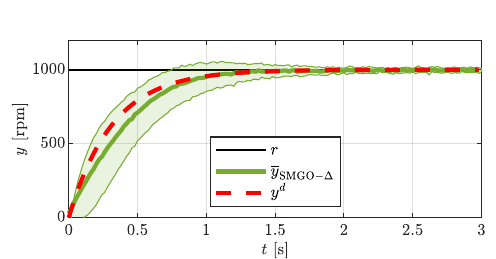}}\vspace{-.3cm}\\
        \subfigure[Controller tuned with the c-VRFT approach\label{fig:comparison_VRFT}]{\includegraphics[scale=.8,trim=0cm 0cm 44.9cm 71cm,clip]{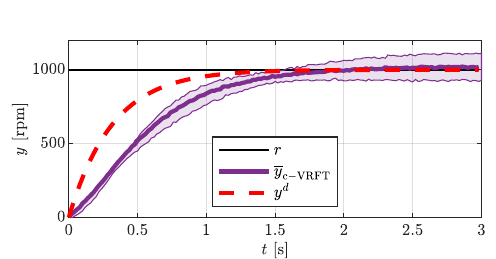}}
        \end{tabular}\vspace{-.2cm}
        \caption{Comparison of data-driven techniques: set point (black line) and desired response (dashed red line) \emph{vs} mean (colored line) and standard deviation (shaded area) of the closed-loop responses attained with different controllers.}\label{fig:comparison_methods}
    \end{figure}
    \begin{figure}[!tb]
        \centering
        \includegraphics[scale=.8,trim=0cm 0cm 44.9cm 67cm,clip]{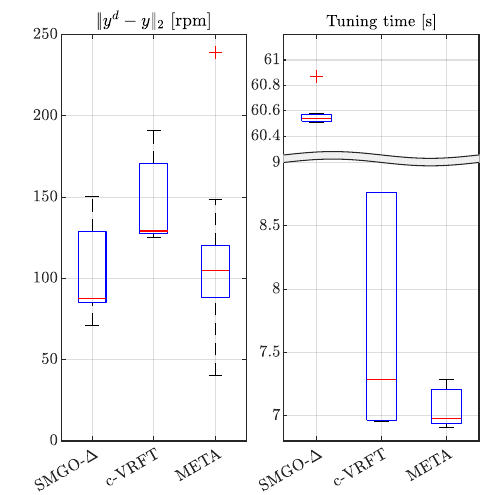}\vspace{-.2cm}
        \caption{Comparison of data-driven techniques: mismatching error (left panel) and tuning time (right panel). The time required for calibration includes that to carry out all the needed experiments, which are $2$ for both the meta-learning and the VRFT approach {with stability constraint} (c-VRFT), and equal to the number of iterations for SMGO-$\Delta$.}
        \label{fig:boxplot_comparison}
    \end{figure}
    \begin{figure}[!tb]
        \centering
        \includegraphics[scale=.8,trim=0cm 0cm 44.9cm 71cm,clip]{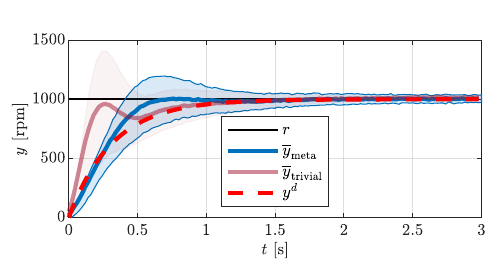}\vspace{-.2cm}
        \caption{Comparison with the \textquotedblleft trivial\textquotedblright \ meta-controller: set point (black line) and desired response (dashed red line) \emph{vs} mean (colored line) and standard deviation (shaded area) of the attained closed-loop outputs.}\label{fig:comparison_trivial}
    \end{figure}
    \begin{figure}[!tb]
        \centering
        \begin{tabular}{c}
            \subfigure[Normalized $S_{k}$ for each meta-motor]{\includegraphics[scale=.8,trim=0cm 0cm 44.9cm 71cm,clip]{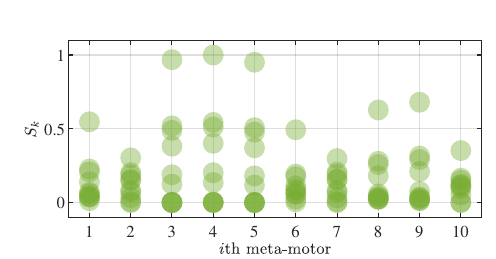}}\vspace{-.3cm}\\ 
            \subfigure[Normalized $\tilde{J}_{k}^{d}$ for each $C_{k}$ in $\mathcal{D}_{N}^{\mathrm{meta}}$]{\includegraphics[scale=.8,trim=0cm 0cm 44.9cm 71cm,clip]{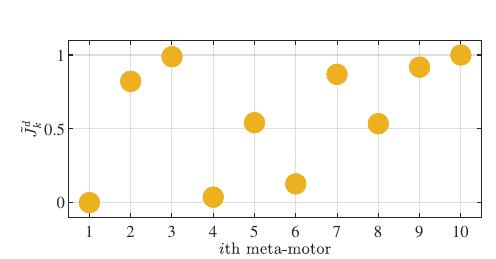}}\vspace{-.3cm}\\
            \subfigure[Elements of $\alpha$]{\includegraphics[scale=.8,trim=0cm 0cm 44.9cm 71cm,clip]{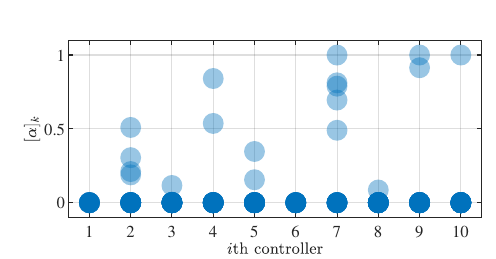}}
            \end{tabular}\vspace{-.2cm}
            \caption{Overview of the meta-dataset and outcomes of meta-control design over different experiments: darker circles for the similarity indexes and the values of $[\alpha]_k$ indicate the values that occur more frequently.}\label{fig:usage_vsperformance_similarity}
    \end{figure}
    We initially compare the performance attained by deploying the meta-controller calibrated by solving problem \eqref{eq:DD_stab_meta_problem} (with penalties fixed at $\lambda_{J}=30$ and $\lambda_{S}=300$) and two others, respectively tuned with the iterative method proposed in \cite{busetto2023data} and the VRFT approach \cite{campi2002}. \textcolor{black}{In solving \eqref{eq:DD_stab_meta_problem} and using the approach proposed by \cite{busetto2023data} we neglect the stability constraint (see \eqref{eq:DD_stabconstr} and \cite[Section 4]{van2011data}, respectively), as both methods always result in stable closed-loops. This is not the case with the VRFT approach that, without the stability constraint, results in unstable closed-loop behaviors in 50\% of the tested cases. We have thus decided to consider the VRFT scheme equipped with the stability constraint with $\delta=0.5$ for our comparison to be fairer (\emph{a.k.a.,} not biased by the unstable behaviors).} 

   \textcolor{black}{
    \begin{remark}[Practical choice $\ell$]\label{remark:choice_ell}
    A window of length $\ell=200$ is generally used when computing $\hat{\delta}(\alpha)$, as discussed in Section~\ref{sec:stability_datadriven}. Nonetheless, sometimes this choice leads to a failure of the employed solver \cite{SDP3,Tutuncu2003}. In this case, $\ell$ is decreased to $10$ and then successively increased up to the maximum value for which the solver is able to retrieve a solution for \eqref{eq:DD_stab_meta_problem}. Note that this operation increases the computational time needed to design the controller.
    \end{remark}}
    
    
    By looking at the responses reported in \figurename{~\ref{fig:comparison_methods}}, it is clear that the meta-controller and the one tuned with SMGO-$\Delta$ lead to similar closed-loop performance, allowing the closed-loop system to mimic the desired behavior. Even if the second controller enables the closed-loop system to attain an average response that is closer to the desired one, this slight improvement in the average matching comes at the price of a considerable increase in tuning time (see \figurename{~\ref{fig:boxplot_comparison}}). Instead, \textcolor{black}{the VRFT approach returns controllers that result in considerably poorer performance, both in tracking the desired response and the reference signal, most probably due to the limited dimension of the considered dataset $\mathcal{D}_{T}$.} 
    These results clearly show the benefits of the proposed meta-learning rationale, which allows for a trade-off between tuning time/effort and performance.

    As a final remark for this section, we wish to stress the importance of optimizing the weights of the meta-controller, more than simply using the information coming from the meta-dataset. Let us consider the \textquotedblleft trivial\textquotedblright \ meta-controller as the one with
    \begin{equation*}
        [\alpha]_{k}=\frac{1}{N},~~~k=1,\ldots,N.       
    \end{equation*}
    The comparison in \figurename{~\ref{fig:comparison_trivial}} shows that such a controller results in worst matching throughout the transient with respect to the optimized one. This result is somehow expected, since the proposed tuning strategy actually leverages all information available in $\mathcal{D}$ (see \eqref{eq:D}) to calibrate the controller, including insights on the similarities between plants and the performance experienced with the different controllers. In turn, this leads the optimizer to \textquotedblleft prefer\textquotedblright \ some controllers within the meta-dataset over others, with the least performing ones that are either never or rarely considered in the construction of the controller for the new motor (see \figurename{~\ref{fig:usage_vsperformance_similarity}}).
    
    \subsection{Assessing non-deteriorating performance}
    \begin{figure}[!tb]
    \centering
    \begin{tabular}{cc}
    \subfigure[$\lambda_S=300$]{\includegraphics[scale=.8,trim=0cm 0cm 50.25cm 69.25cm,clip]{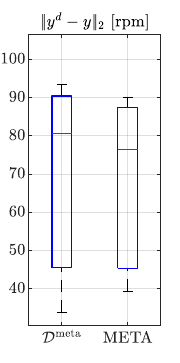}} & \subfigure[$\lambda_S=3000$]{\includegraphics[scale=.8,trim=0cm 0cm 50.25cm 69.25cm,clip]{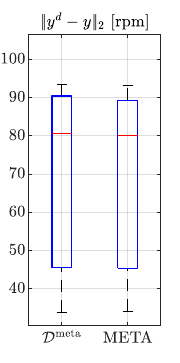}}\\
    \end{tabular}
    \caption{\textcolor{black}{Non-deteriorating performance with $\lambda_J=30$: 2-norm of the mismatching error in closed-loop.}}\label{fig:non-decreasing1}
    \end{figure}
        \begin{figure}[!tb]
    \centering
    \begin{tabular}{c}
    \subfigure[$\lambda_S=300$]{\includegraphics[scale=.8,trim=0cm 0cm 44.25cm 71.25cm,clip]{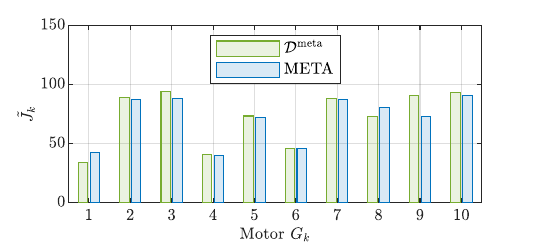}}\vspace{-.3cm}\\ 
    \subfigure[$\lambda_S=3000$]{\includegraphics[scale=.8,trim=0cm 0cm 44.25cm 71.25cm,clip]{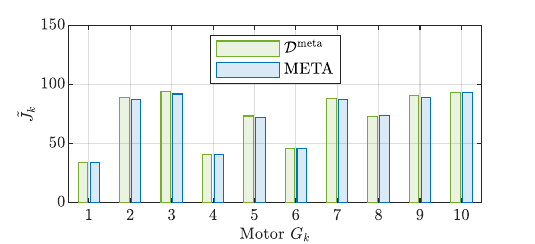}}\vspace{-.3cm}
    \end{tabular}
    \caption{Non-deteriorating performance with $\lambda_J=30$: 2-norm of the mismatching error in closed-loop for each of the 10 tests picking one of the motors in $\mathcal{D}^{\mathrm{meta}}$.}\label{fig:non-decreasing2}
    \end{figure}
    \begin{figure}[!tb]
    \centering
    \includegraphics[scale=.8,trim=0.1cm 0cm 47.8cm 69cm,clip]{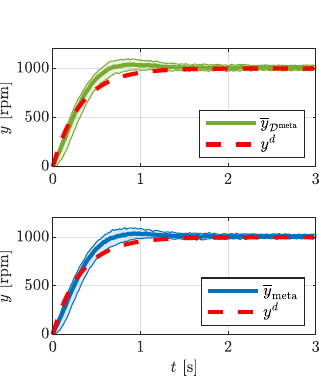}
    \caption{\textcolor{black}{Non-deteriorating performance with $\lambda_S=300$ and $\lambda_J=30$: and mean (line) and standard deviation (shaded area) of closed-loop responses contained in the meta-dataset [top panel] and attained with the meta-controller [low panel] \emph{vs} the desired output (dashed red line).}}\label{fig:non-decreasing3}
    \end{figure}
    We now empirically evaluate the capability of $C(\alpha)$ to result in non-deteriorating performance, to test the robustness of the property in Proposition~\ref{prop:non_decrease} in a non-idealized setting. To this end, we still consider $10$ different new motors, that are in turn equal to one of the meta-motors used to construct $\mathcal{D}_{N}^{\mathrm{meta}}$. 
    
    As shown in \textcolor{black}{\figurename{~\ref{fig:non-decreasing1}} and \figurename{~\ref{fig:non-decreasing2}}}, the proposed meta-control rationale \textcolor{black}{generally} allows us to attain improved \textcolor{black}{or, at least, similar} model-reference matching with respect to the one achieved by using the controllers in the meta-dataset \textcolor{black}{depending on the chosen trade-off between $\lambda_{S}$ and $\lambda_{J}$ in \eqref{eq:DD_regularizedloss}. Indeed, lower values of $\lambda_{S}$ tend to enhance the overall performance attained by using the meta-controller with respect to that achieved leveraging the controllers within $\mathcal{D}^{\mathrm{meta}}$ (as also confirmed by the closed-loop responses shown in \figurename{~\ref{fig:non-decreasing3}}). This improvement comes at the price of having two instances (namely, $G_{1}$ and $G_{8}$) for which the meta-controller results in a slight deterioration of performance. Nonetheless, by increasing $\lambda_{S}$ to $3000$, the performance of the meta-controller are indeed non-deteriorating, even if the difference in the closed-loop matching attained with the meta-controller and the ones within $\mathcal{D}^{\mathrm{meta}}$ becomes overall less relevant.}

    \textcolor{black}{These results thus show that the proposed approach can recognize the meta-motor equal to the new one (thus prioritizing the associated controller), while still generally benefiting from performing controllers tailored to meta-motors that are also similar to the new one as long as $\lambda_{S}$ does not excessively dominate over $\lambda_{J}$.}


    \subsection{Sensitivity analysis to the regularization penalties}
    \begin{figure}[!tb]
    \centering
    \includegraphics[width=\linewidth,trim=0cm 0cm 0cm .8cm,clip]{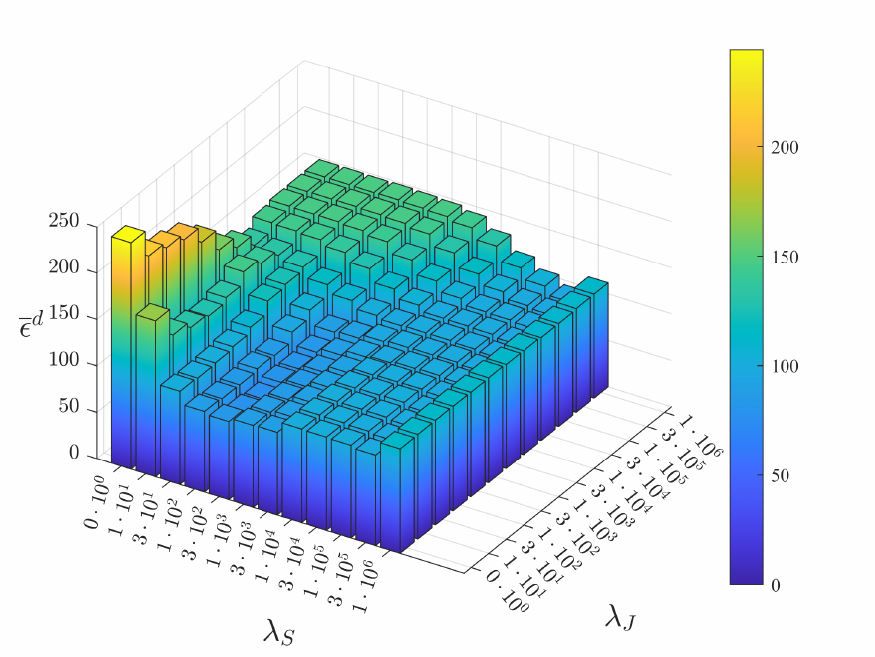}
    \caption{Sensitivity analysis: average 2-norm of 
    the closed-loop mismatching error $\overline{\epsilon}^{d}$ [rpm] over the $10$ realizations of the new motor \emph{vs} $\lambda_S$ and $\lambda_L$. The values of $\overline{\epsilon}^{d}$ above $250$ have been saturated to this value for visualization purposes.}
    \label{fig:sensitivity}
\end{figure}
    Going back to the scenario where the $N=10$ new motors are sampled at random, we now evaluate the sensitivity of the attained closed-loop response to the choice of $\lambda_{S}$ and $\lambda_{J}$ in \eqref{eq:DD_stab_meta_problem}, while still neglecting the stability constraint. The results of our analysis are reported in \figurename{~\ref{fig:sensitivity}}, clearly showing that the meta-control design procedure tends to result in poorer performance when both regularization penalties are low, especially for what concerns $\lambda_{S}$. This result is somehow expected since low values of $\lambda_{S}$ do not leverage insights on the plant similarities. At the same time, also high values of $\lambda_{S}$ result in a drop in performance. In this case, \eqref{eq:DD_stab_meta_problem} tends to prioritize controllers based on their similarity only, even when the performance already experienced with them is poor. In turn, this potentially hampers the achievement of the desired closed-loop behavior. \figurename{~\ref{fig:sensitivity}} further highlights the importance of retaining lower values of $\lambda_{J}$, due to the fact that higher $\lambda_{J}$ would result in prioritizing performance over similarity. This might be undesired, especially when the most performing controllers in $\mathcal{D}_{N}^{\mathrm{meta}}$ are the ones associated with meta-motors that are more different from the new one. Note that even when $\lambda_{S}=\lambda_{L}=0$ the meta-controller attains an average 2-norm of the matching error in closed-loop of about $244.54$ [rpm], which is comparable to the one achieved by designing a brand new PI with the VRFT approach, that is equal to $241.35$ [rpm].

    \subsection{On the impact of the meta-dataset's features}
    \begin{figure}[!tb]
        \centering
        \centering
        \includegraphics[scale=.8,trim=0cm 0cm 44.9cm 69.3cm,clip]{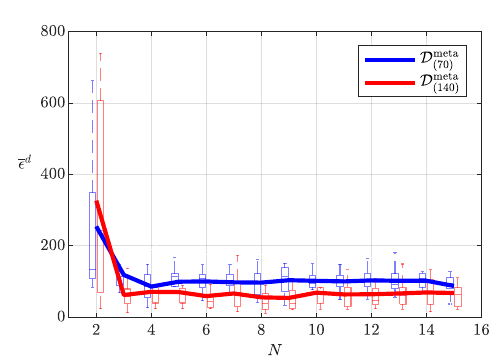}\vspace{-.2cm}
        \caption{Average 2-norm of the closed-loop mismatching error $\overline{\epsilon}^{d}$ [rpm] \emph{vs} dimension of the meta-dataset and number of SMGO-$\Delta$ iterations. $\mathcal{D}_{(70)}^{\mathrm{meta}}$ and $\mathcal{D}_{(140)}^{\mathrm{meta}}$ denote the results obtained by running SMGO-$\Delta$ for $70$ and $140$ iterations, respectively.}\label{fig:boxplot_dimension}
    \end{figure}
    We now focus on how the performance attained with the meta-controller is affected by the size $N$ of the meta-dataset and the \textquotedblleft quality\textquotedblright \ of the included controllers. To this end, we consider $14$ meta-datasets of increasing dimension\footnote{The meta-dataset of size $N=10$ corresponds to the one used in all other analysis, while $\mathcal{D}_{N-1}^{\mathrm{meta}} \subset \mathcal{D}_{N}^{\mathrm{meta}}$.}, and two possible values for the maximum number of iterations of SMGO-$\Delta$, namely $70$ and $140$. As expected, \figurename{~\ref{fig:boxplot_dimension}} highlights that closed-loop matching generally improves when more iterations of SMGO-$\Delta$ are performed, concurrently with a decrease of the average value of the index $\tilde{J}_{k}^{d}$\footnote{As an example, for $N=10$ the average $\tilde{J}_{k}^{d}$ is equal to $5.51 \cdot 10^{3}$ when $70$ iterations are performed, while it drops to $4.13 \cdot 10^{3}$ when $140$ SMGO-$\Delta$ iterations are carried out.}. Only the results obtained for $N=2$ are not consistent with this trend, because the limited dimension of the meta-dataset hampers the overall closed-loop behavior and, thus, the matching performance. Meanwhile, performance tends to improve up to $N=4$, then reaching a plateau (on average). This outcome indicates that, at least for the example at hand, a meta-dataset of dimension $N=4$ would be sufficiently informative to design a performing meta-controller.  

    \subsection{The effect of the stability constraint}
    \begin{figure}[!tb]
        \centering
        \begin{tabular}{c}
             \subfigure[Matching performance and tuning time\label{fig:stab_boxplot}]{\includegraphics[scale=.8,trim=0cm 0cm 44.9cm 71cm,clip]{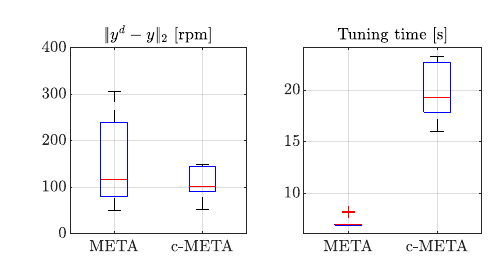}} \vspace{-.3cm}\\
              \subfigure[attained closed-loop responses]{\includegraphics[scale=.8,trim=0cm 0cm 44.9cm 69.3cm,clip]{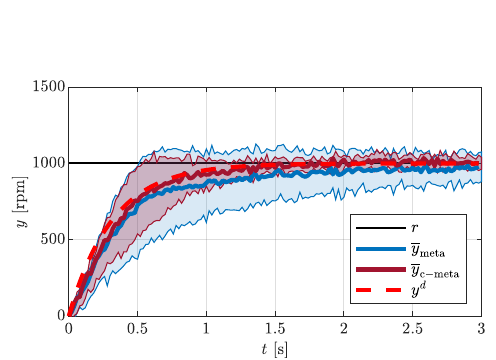}}
        \end{tabular}\vspace{-.2cm}
        \caption{Meta-control design with stability guarantees: set point (black line) and desired behavior (red dashed line) \emph{vs} mean (colored lines) and standard deviation (shaded areas) of the responses obtained with the meta-controllers designed with and without the stability constraint. The former is denoted as c-META.}\label{fig:incorporating_stability}
    \end{figure}
    We finally incorporate the stability constraint into the design problem, solving \eqref{eq:DD_stab_meta_problem} for $\delta=0.5$\textcolor{black}{, as employed in c-VRFT. The window length $\ell$ needed as for Section~\ref{sec:stability_datadriven} is once again chosen as detailed in Remark~\ref{remark:choice_ell}.}

    Since the introduction of the stability constraint has not resulted in changes in performance (but only in an increased design time) with the level of noise considered in the previous analysis, we now increase the standard deviation of the noise corrupting the outputs in $\mathcal{D}_{T}$ to $40$ [rpm] ($\overline{\mathrm{SNR}}=10.61$ [dB]), while keeping the meta-dataset of size $N=10$ unchanged with respect to the one exploited in the other analysis. The results we obtained are reported in \figurename{~\ref{fig:incorporating_stability}}. Clearly, they show that the introduction of the stability constraint in this more challenging setting can be of help in effectively coping with the noise acting on the data. Indeed, we observe a significant reduction in instances where matching performance get further away from the average. This benefit comes at the price of an increased design time (see \figurename{~\ref{fig:stab_boxplot}}), whose considerable variance is linked to the procedure exploited to adjust the window length $\ell$ every time the solver fails.       
     
    \section{Conclusions}\label{sec:conclusions}
    In this work, we have employed for the first time a \textit{meta-learning rationale} to enhance both the effectiveness and efficiency of direct, data-driven model reference control design. Like humans gain knowledge from past experiences, we propose to leverage controllers already calibrated for similar systems and data-based insights on their experienced performance and similarities to formulate a novel, meta-design problem. The numerical study carried out to calibrate a PI controller for a brushless DC motor has highlighted the potential of the method to enhance closed-loop performance when a reduced amount of data is collected, while at the same time resulting in a reduced tuning time.

     Future work will be devoted to setting the ground for a theoretical framework to analyze the informativity of the meta-dataset. This development would be crucial to extend the proposed approach to other (possibly iterative) data-based control techniques, since it would allow one to limit the dimension of the meta-dataset, by incorporating only the most informative incoming data in it. 

    \bibliographystyle{abbrv}
    \bibliography{meta_control.bib}
\end{document}